\newcommand{\ignore}[1]{}
\newcommand{\ket}[1]{\left|#1\right\rangle}
\newcommand{\bra}[1]{{\left\langle#1\right|}}
\newcommand{\braket}[2]{\left\langle #1| #2 \right\rangle}
\newcommand{\ketbra}[2]{ \left| #1 \right\rangle\left\langle #2 \right|}
\newcommand{\prnt}[1]{\left( #1 \right)}
\newcommand{\prnttt}[1]{\left\{ #1 \right\}}
\newcommand{\abs}[1]{{\left| #1 \right|}}
\newcommand\norm[1]{\left\lVert#1\right\rVert}
\DeclareMathOperator{\tr}{tr}
\newcommand{\yanote}[1]{\textcolor{red}{{\bf (Yosi:}{#1}{\bf )}}}
\newcommand{\shnote}[1]{\textcolor{blue}{{\bf (SC:}{#1}{\bf )}}}
\def\th{\ensuremath{^\mathrm{th}} }
\def \eps {\varepsilon}
\newcommand{\nrm}[1]{\left\lVert #1 \right\rVert}
\newcommand{\Oo}{\ensuremath{\mathcal{O}}}
\newcommand{\pvp}{\vec{p}{\kern 0.45mm}'}
\let\oldnabla\nabla
\renewcommand{\nabla}{\oldnabla\!}
\def\Tr{\mathrm{Tr}}
\def\col{\mathrm{col}}
\newtheorem{corollary}{Corollary}
\theoremstyle{definition}
\newtheorem{definition}{Definition}
\newtheorem{lemma}{Lemma}
\newtheorem{claim}{Claim}
\begin{document}

\bibstyle{unsrt}

\title{Improved Upper Bounds for the Hitting Times of Quantum Walks} 

\author{Yosi Atia$^{1,2}$}
\email[]{gyosiat@gmail.com}
\author{Shantanav Chakraborty$^{3,4}$}  
\email[]{shchakra@iiit.ac.in}

\affiliation{$^1$ School of Computer Science and Engineering, The Hebrew University, Jerusalem,  Israel}
\affiliation{$^2$ University of Texas at Austin, Texas, USA}
\affiliation{$^3$ Centre for Quantum Science and Technology, International Institute of Information Technology Hyderabad, Telangana, India} 
\affiliation{$^4$ Centre for Security, Theory and Algorithmic Research, International Institute of Information Technology Hyderabad, Telangana, India}


\begin{abstract}
Continuous-time quantum walks have proven to be an extremely useful framework for the design of several quantum algorithms. Often, the running time of quantum algorithms in this framework is characterized by the quantum hitting time: the time required by the quantum walk to find a vertex of interest with a high probability. In this article, we provide improved upper bounds for the quantum hitting time that can be applied to several CTQW-based quantum algorithms. In particular, we apply our techniques to  the glued-trees problem, improving their hitting time upper bound by a polynomial factor: from $O(n^5)$ to $O(n^2\log n)$. Furthermore, our methods also help to exponentially improve the dependence on precision of the continuous-time quantum walk based algorithm to find a marked node on any ergodic, reversible Markov chain by Chakraborty et al. [PRA 102, 022227 (2020)].
\end{abstract}
\date{\today}
\maketitle
\section{Introduction}

Quantum walks \cite{ADZ93,Meyer1996, AAKV01, FG98, Szegedy04, Portugal13} are quantum analogues of random walks on graphs and have widespread applications in several areas of quantum information processing \cite{kempe03}. In particular, they are an universal model for quantum computation and are crucial to the design of quantum algorithms for a plethora of problems such as element distinctness \cite{Ambainis07,BCJKM13}, searching for a marked state in graphs \cite{SKW03,CG04,AKR05,Tulsi08,MNRS11,CNR18,AGJ19}, matrix product verification \cite{BS06}, triangle finding \cite{MSS07}, group commutativity \cite{MN07} and many others \cite{Kempe2002,CCDFGS03,CSV07,FGG07,montanaro15}. In fact, for some oracular problems, quantum walks exhibit an exponential speedup over their classical counterparts \cite{CCDFGS03,CSV07}. 

For several of these algorithms, the runtime strongly depends on the time required for the quantum walk on a graph to localize at a vertex inside a marked set of vertices of interest. In the context of classical random walks, the \textit{hitting time} is defined as the expected time required to hit a marked vertex. As such, the quantum hitting time characterizes the running time of the underlying quantum algorithm \cite{Kempe2002,CFG02,CCDFGS03,KB06,CNR18,MNRS11}. 

For CTQW, the Hamiltonian governing the dynamics of the walk encodes the (vertex or edge) connectivity of the underlying graph. The dynamics involves evolving some initial state according to this Hamiltonian for time $T$, followed by a measurement in the basis spanned by the vertices of the graph, thus finding a target state with a high probability. 
However, Hamiltonian evolutions are unitary, and therefore, unlike classical random walks, quantum walks never converge to a fixed state with time.
One method to address this is to evolve the system by $H$ for a random duration $t$ that is typically uniformly distributed between $0$ and $T$, followed by a measurement \cite{AAKV01, CFG02, CNR18, CLR19, CLR20}. 

In this work, we use the natural definition for quantum hitting time as the ratio between the maximal time of evolution $T$ and time-averaged probability of the walker to localize in the marked vertex. Our main observation is that CTQW is a dephasing process, or an energy measurement when the result is forgotten.  With this observation we provide improved upper bounds for the hitting time of a continuous-time quantum walk (CTQW) on a graph, and  analyse  the hitting time dependence on (i) the probability distribution according to which $t$ is distributed and (ii) the value of the maximal time of evolution $T$. 

Dephasing and dissipation have been shown to be advantageous for certain quantum processes in scenarios different from ours \cite{ALT2008, RMKL+2009, NCMO2018}. In such cases the underlying system interacts with an environment (often a thermal bath) that relaxes it fast enough to the an eigenstate having a high overlap with the state of interest. 

In this article, our insights yield two improved CTQW-based algorithms. We prove that changing the distribution of $t$ provides a polynomial improvement to the upper bound on the quantum hitting time of the glued trees problem \cite{CCDFGS03}. Additionally, we apply our techniques to the problem of finding a marked vertex on a graph, known as spatial search. We use our methods to improve exponentially, the dependence on precision of the recently developed CTQW-based spatial search algorithm \cite{CNR18}.

The input to the glued-trees problem is (i) a graph composed of two binary trees of depth $n$ each, that are glued together, and (ii) the root vertex of one of the trees, labeled $\mathtt{Entrance}$. The task is to find the root of the other binary tree labeled $\mathtt{Exit}$, where the access to the graph is by an oracle which returns the labels of the neighbors of a given vertex (See Fig.~\ref{fig:G4}).  This problem is one of the few cases where a quantum algorithm gives an exponential speedup over any classical algorithm relative to an oracle \cite{CCDFGS03}. The authors proved that the running time of this algorithm (which is the hitting time of the CTQW on the underlying graph) is in $O(n^5)$, exponentially better than any classical algorithm. In this work, we are able to improve the running time of the glued trees algorithm to $O(n^2\log n)$, providing a polynomial improvement in the algorithmic performance.

Another important application of CTQW-based algorithms is to solve the spatial search problem, where the goal is to find a marked vertex in a graph faster than classical random walks. We show that our definition of hitting time encompasses the running times of spatial search algorithms by CTQW. Childs and Goldstone introduced the first spatial search algorithm \cite{CG04} in this framework which offered a quadratic speedup over classical random walks for several graphs but fails to achieve a generic quadratic speedup \cite{CNAO16,CNR20}. Recently, in Ref.~\cite{CNR18} the authors provided a spatial search algorithm by CTQW which finds an element in a marked set of vertices on any ergodic, reversible Markov chain in the square root of the so-called extended hitting time of the corresponding classical random walk. As the extended hitting time is equal to the hitting time in the scenario where only a single vertex is marked, the algorithm offers a quadratic improvement for this case. In this article, we exponentially improve the dependence of this algorithm on precision. 

This article is organized as follows. In Sec.~\ref{sec:background}, we lay out formally the definition of the quantum hitting time and discuss prior work. In Sec.~\ref{sec:improved-hitting}, we derive improved upper bounds on the quantum hitting time, in Sec.~\ref{sec:Glued} and in Sec.~\ref{sec:search}, we apply our bounds to improve the running time of the glued trees algorithm and the spatial search algorithm, respectively. Finally, we conclude with a brief discussion and summary in Sec.~\ref{sec:discussion}. 
\section{Background}
\label{sec:background}
Consider a graph $G$ of $n$ vertices labelled $\{1,2,\cdots,n\}$ and a Hamiltonian $H$ which is a Hermitian matrix of dimension $n$ that encodes the connectivity of the underlying graph. We require that $H$ is local, i.e.\ its $(j,k)^{\mathrm{th}}$ entry is non-zero if node $j$ is adjacent to node $k$. Thus, $H$ may be proportional to the adjacency matrix of the graph or the graph Laplacian. Alternatively, CTQWs can also be defined on the edges of $G$ \cite{CNR18,CLR19}, in which case the underlying Hamiltonian preserves the local edge-connectivity of $G$. A continuous-time quantum walk on $G$ corresponds to the time-evolution of this time-independent Hamiltonian $H$, starting from some initial state $\ket{\psi_0}$.  Formally, 


\begin{definition} [CTQW] \label{def:walk}
Let $G$ be a graph of $n$ vertices and $H$ be a Hamiltonian encoding the connectivity of $G$. A CTQW on $G$ is performed by initializing the system to some state $\ket{\psi_0}$, evolving it according to $H$ for a time $t$, uniformly distributed in $[0,T]$, and finally measuring in the vertex basis of $G$. 
\end{definition} 

From this definition, depending on the choice of $T$ and $\ket{\psi_0}$, the random evolution time causes the quantum walk to converge to a fixed state in a time-averaged sense. For a classical random walk, the hitting time is the expected number of steps required to find some vertex of interest, say $y$ for the first time. However, for quantum walks, testing periodically to find if $y$ is reached is impossible, because it will turn the walk into a classical walk. To enable such a comparison, we formally define the quantum hitting time as follows:

\begin{definition}[Quantum hitting time] \label{def:HittingTime}
Let $G$ be a graph with a set of $V$ vertices such that $y$ is some vertex of interest. Furthermore let $H$ be the Hamiltonian corresponding to the CTQW on $G$. Then starting from some initial state $\ket{\psi_0}$, the hitting time of a CTQW on $G$ with respect to $y$, is defined as follows: 
\begin{equation} \label{eq:HittingTime}
\tau_G(y|\psi_0) \triangleq \min_{T>0} \frac{T}{\bar{p}_{T}(y|\psi_0)}, 
\end{equation}
wherein $\bar{p}_{T}(y|\psi_0)$ is the expected probability to find $y$ when $t$ is distributed uniformly:
\begin{equation} \label{eq:avg-prob}
\bar{p}_{T}(y|\psi_0)=\frac{1}{T}\intop_{0}^{T}\left|\langle y|e^{-iHt}|\psi_0\rangle\right|^{2}dt.
\end{equation}
\end{definition}
 
Here $\bar p_T(y|\psi_0)$ is the mean probability that a quantum walk starting from some initial state $\ket{\psi_0}$ would end in $y$ when the evolution time $t$ is uniformly distributed in $[0,T]$. By repeating the walk $1/\bar p_T(y|\psi_0)$ times, the vertex $y$ is found with a constant probability. Note that the notion of quantum hitting time can easily be generalized to the scenario where there are multiple vertices of interest.

Consider that the eigenvalues of the quantum walk Hamiltonian $H$, in descending order be $E_n\geq E_{n-1}\geq \cdots \geq E_1$ and the corresponding eigenstates be $\ket{E_n},\ket{E_{n-1}},\cdots,\ket{E_1}$ such that $H\ket{E_j}=E_j\ket{E_j}$.  Then asymptotically, 
\begin{equation}
\label{eq:limiting-distribution}
p_\infty (y|\psi_0)\triangleq\lim_{T\rightarrow\infty}\bar{p}_T(y|\psi_0)=\sum_{k}|\braket{y}{\Pi_{\mathcal V_k}|\psi_0}|^2,
\end{equation}
wherein $\Pi_{\mathcal V_k}=\ket{E_k}\bra{E_k}$. That is, the time-averaged quantum walk from Definition \ref{def:walk}, approaches the aforementioned limiting distribution, as $T\rightarrow\infty$. Thus the average probability of the quantum walk being in the vertex $y$, never exceeds Eq.~\eqref{eq:limiting-distribution}. It is natural to ask how fast a CTQW converges to this distribution. This is in essence captured by the following lemma, adapted from Ref.~\cite{CCDFGS03}.

\begin{lemma} [adapted from Lemma 1 in \cite{CCDFGS03}] \label{lem:Glued} Consider a CTQW on some graph $G$, defined by a Hamiltonian $H$,  starting from some initial state $\ket{\psi_0}$. The average probability of finding a vertex $y$ is lower  bounded as follows:
\begin{equation}
\bar{p}_T(y|\psi_0) \ge \sum_{k} 
\abs{\bra{y} \Pi_{\mathcal V_k} \ket{\psi_0}}^2 - \frac{2}{T\Delta E_{\min}},
\end{equation}
wherein $\mathcal V_k$ are eigenspaces of $H$, their respective projections are $\Pi_{\mathcal V_k}$, and $\Delta E_{\min}$ is the smallest gap between any pair of eigenvalues of the Hamiltonian. 
\end{lemma}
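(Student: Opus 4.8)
The plan is to expand both $\ket{\psi_0}$ and $\ket{y}$ in the eigenbasis of $H$, split the time-averaged probability into a ``diagonal'' part that reproduces the limiting distribution of Eq.~\eqref{eq:limiting-distribution} and an ``off-diagonal'' oscillatory part, and then show that the off-diagonal part is $O(1/T)$.

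First I would write, grouping eigenvectors by distinct eigenvalue $E_k$,
\begin{equation}
\bra{y}e^{-iHt}\ket{\psi_0} = \sum_k e^{-iE_k t}\, c_k, \qquad c_k := \bra{y}\Pi_{\mathcal V_k}\ket{\psi_0}.
\end{equation}
Taking the squared modulus and integrating term by term (a finite sum, so this is immediate) gives
\begin{equation}
\bar p_T(y|\psi_0) = \sum_k |c_k|^2 \;+\; \sum_{k\neq l} c_k\,\overline{c_l}\,\frac{1}{T}\int_0^T e^{-i(E_k-E_l)t}\,dt .
\end{equation}
The first sum is precisely $\sum_k \abs{\bra{y}\Pi_{\mathcal V_k}\ket{\psi_0}}^2$, so it only remains to lower bound the second sum by $-2/(T\Delta E_{\min})$.

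For each $k\neq l$ the integral equals $\bigl(1-e^{-i(E_k-E_l)T}\bigr)/\bigl(i(E_k-E_l)T\bigr)$, of magnitude at most $2/(|E_k-E_l|\,T)\le 2/(\Delta E_{\min}\,T)$, since $|E_k - E_l|\ge \Delta E_{\min}$ for distinct eigenvalues. Hence the off-diagonal sum has absolute value at most $\tfrac{2}{T\Delta E_{\min}}\sum_{k\neq l}|c_k|\,|c_l|$, and the crux of the argument is that $\sum_{k\neq l}|c_k|\,|c_l|\le 1$. To see this, apply Cauchy--Schwarz inside each eigenspace, $|c_k| = |\bra{y}\Pi_{\mathcal V_k}\ket{\psi_0}| \le \norm{\Pi_{\mathcal V_k}\ket{y}}\,\norm{\Pi_{\mathcal V_k}\ket{\psi_0}} =: y_k \psi_k$; since the $\Pi_{\mathcal V_k}$ resolve the identity and $\ket{y},\ket{\psi_0}$ are unit vectors, $\sum_k y_k^2 = \sum_k \psi_k^2 = 1$, so a second application of Cauchy--Schwarz gives $\sum_{k\neq l}|c_k|\,|c_l| \le \prnt{\sum_k y_k\psi_k}^2 \le \prnt{\sum_k y_k^2}\prnt{\sum_k \psi_k^2} = 1$. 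Combining this with the previous displays yields the stated bound.

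The one genuinely non-routine step is this double use of Cauchy--Schwarz: bounding $\bigl|\sum_k e^{-iE_k t}c_k\bigr|^2$ crudely would leave a factor $\prnt{\sum_k|c_k|}^2$ in front of $1/(T\Delta E_{\min})$, which need not be $O(1)$; collapsing it to the constant $1$ requires first splitting each amplitude $c_k$ into its $\ket{y}$- and $\ket{\psi_0}$-projection norms. The only bookkeeping subtlety is to index the sums by \emph{distinct} eigenvalues, with degeneracies absorbed into the projectors $\Pi_{\mathcal V_k}$, so that $\Delta E_{\min}$ really is the minimal spectral gap.
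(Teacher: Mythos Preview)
Your proof is correct and follows essentially the standard argument of Childs et al.\ \cite{CCDFGS03} from which the lemma is adapted (the present paper does not give its own proof, only cites the original). The eigenbasis expansion, splitting into diagonal and off-diagonal parts, and the double Cauchy--Schwarz bound $\sum_{k\neq l}|c_k||c_l|\le 1$ are exactly the ingredients of that proof.
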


Note that the minimum time after which a CTQW approaches the limiting distribution $p_\infty (y|\psi_0)$ is known as the \textit{quantum mixing time} and is characterized by $\Delta E_{\min}$ \cite{CLR20}. However, $\Delta E_{\min}$ can be extremely small, leading to an extremely loose upper bound for the quantum hitting time. In the following section, we provide improvements to the upper bound on the quantum hitting time for problems where we are interested in finding a specific vertex. 

\section{Improved Upper Bounds for Hitting Times}
\label{sec:improved-hitting}
Continuous-time quantum walks, as per Definition \ref{def:walk} is a dephasing process in the eigenbasis of the underlying Hamiltonian. Equivalently, the randomized time-evolution procedure can be thought of as an energy measurement where the result of the measurement is forgotten. To see this, consider the time-averaged density matrix of the walk, which we denote by $\langle \rho(T) \rangle $ which can be written as
\begin{equation}
\label{eq:time-averaged-density-matrix}
    \langle \rho(T) \rangle = \frac{1}{T} \intop_0^T dt e^{-iHt}\rho_0 e^{iHt}=\frac{1}{T} \sum_{E_j\neq E_k} \bra{E_j}\rho_0\ket{E_k} \ketbra{E_j}{E_k} \frac{1-e^{-iT(E_j-E_k)}}{i(E_j-E_k)} + \sum_{l}\bra{E_l}\rho_0\ket{E_l} \ketbra{E_l}{E_l}
\end{equation}
When written in basis of the eigenstates of the Hamiltonian $H$, the off-diagonal elements corresponding to different energies decay with $T$. As such as $T\rightarrow\infty$, all the off-diagonal terms disappear. From the perspective of the time-energy uncertainty principle \cite{AMP02,AA17}, for  $T\approx 1/\Delta E_{\min}$, a measurement in the eigenbasis of the Hamiltonian $H$ can identify any eigenstate. Furthermore, once the measurement result is forgotten, the system is approximately a mixture of all the eigenstates. The contribution of each eigenstate to the probability of finding $y$ accumulates because there are no phases and no destructive interference. 

However, if the state we are interested in, say $y$ has a significant overlap with one or a few of the eigenstates of $H$, it suffices to choose $T$ large enough so that the system is in a mixed state between the relevant eigenstates and the rest. This way, one can improve the quantum hitting time bound, as in such cases, $T$ is significantly lower than the choice in Lemma \ref{lem:Glued}. This has been elucidated in Lemma \ref{lem:main} where we choose $T$ such that (at least) one eigenspace is dephased, instead of all the eigenstates (see proof in Appendix \ref{apndx:ProofWeakLemma}):

\begin{lemma} \label{lem:main} 
Consider a CTQW on $H$ for $t\in[0,T]$. Let $\mathcal V^*$ be an eigenspace of $H$ with energy $E^*$ such that $\Pi_\mathcal{V^*}$ is a projection on $\mathcal V^*$ and $\Delta E^*$ is the smallest gap between $E^*$ and the other eigenvalues of the Hamiltonian. Then, 
 \begin{equation}
 \label{eq:Lemmain}
\bar{p}_{T}(y|\psi_0)\ge\abs {\bra{y}\Pi_{\mathcal V^*}\ket{\psi_0}}^2  \prnt{1-\frac{4}{T\Delta E^*}}.
\end{equation}   
\end{lemma}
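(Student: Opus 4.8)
The plan is to single out the eigenspace $\mathcal V^*$ and to treat the rest of the spectrum as an off-resonant remainder. Write $P=\Pi_{\mathcal V^*}$ and $Q=\Id-P$. Since $e^{-iHt}$ commutes with $P$ and acts as $e^{-iE^*t}$ on $\mathcal V^*$, one has $\bra{y}e^{-iHt}\ket{\psi_0}=e^{-iE^*t}\bra{y}P\ket{\psi_0}+g(t)$ with $g(t):=\bra{y}Qe^{-iHt}Q\ket{\psi_0}$. Writing $a:=\bra{y}P\ket{\psi_0}$, squaring, integrating over $[0,T]$ and dividing by $T$ gives
\begin{equation}
\bar p_{T}(y|\psi_0)=\abs{a}^2+\frac1T\int_0^T\abs{g(t)}^2dt+\frac2T\,\mathrm{Re}\!\left(\bar a\int_0^T e^{iE^*t}g(t)\,dt\right).
\end{equation}
This is exactly the dephasing picture of Section~\ref{sec:improved-hitting} applied to the splitting $\mathcal V^*$ versus its complement: the $P$-block of the time-averaged density matrix is frozen and contributes $\abs{a}^2$; the $Q$-block stays positive semidefinite and contributes the non-negative term $\frac1T\int_0^T\abs{g}^2$; and the only correction is the last, off-diagonal term.

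The next step is to bound the oscillatory integral $\Phi:=\int_0^T e^{iE^*t}g(t)\,dt=\bra{y}Q\!\left(\int_0^T e^{i(E^*-H)t}dt\right)\!Q\ket{\psi_0}$ (here $E^*-H$ stands for $E^*\Id-H$). On the range of $Q$ the operator $\int_0^T e^{i(E^*-H)t}dt=\bigl(e^{i(E^*-H)T}-\Id\bigr)\bigl(i(E^*-H)\bigr)^{-1}$ has norm at most $2/\Delta E^*$, since $e^{i(E^*-H)T}$ is unitary while every eigenvalue of $H$ on that subspace lies at distance at least $\Delta E^*$ from $E^*$. Hence $\abs{\Phi}\le\frac{2}{\Delta E^*}\nrm{Qy}\,\nrm{Q\psi_0}\le\frac{2}{\Delta E^*}$, and simply dropping the $\abs{g}^2$-term already yields $\bar p_{T}(y|\psi_0)\ge\abs{a}^2-\frac{4}{T\Delta E^*}\abs{a}$.

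It remains to upgrade this to the factor $1-\frac{4}{T\Delta E^*}$ multiplying $\abs{a}^2$ rather than $\abs{a}$. If $T\Delta E^*\le4$ the claimed bound is vacuous (its right-hand side is non-positive), so assume $T\Delta E^*>4$. Now one must retain $B:=\frac1T\int_0^T\abs{g(t)}^2dt\ge 0$: applying $2XY\le\frac{4X^2}{T\Delta E^*}+\frac{T\Delta E^*}{4}Y^2$ with $X=\abs{a}$, $Y=\abs{\Phi}/T$ reduces the desired inequality to $B\ge\frac{\Delta E^*}{4T}\abs{\Phi}^2$, i.e.\ to the Cauchy--Schwarz-type estimate
\begin{equation}
\abs{\Phi}^2\ \le\ \frac{4}{\Delta E^*}\int_0^T\abs{g(t)}^2dt .
\end{equation}
Establishing this is the technical core and the step I expect to be the main obstacle: plain Cauchy--Schwarz only gives $\abs{\Phi}^2\le T\int_0^T\abs{g}^2$, which is not enough once $T\Delta E^*>4$, so one genuinely has to exploit that $e^{iE^*t}g(t)$ is a superposition of oscillations at frequencies at least $\Delta E^*$ (together with $\nrm{Qy},\nrm{Q\psi_0}\le1$), which prevents $g$ from concentrating its $L^2[0,T]$-mass in a way that would make $\abs{\Phi}$ large relative to $\int_0^T\abs{g}^2$.
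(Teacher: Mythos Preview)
Your decomposition $\bra{y}e^{-iHt}\ket{\psi_0}=e^{-iE^*t}a+g(t)$ with $a=\bra{y}\Pi_{\mathcal V^*}\ket{\psi_0}$ is exactly the paper's starting point, and your operator bound $\abs{\Phi}\le\frac{2}{\Delta E^*}\nrm{Qy}\nrm{Q\psi_0}\le\frac{2}{\Delta E^*}$ (hence $\bar p_T\ge\abs{a}^2-\frac{4\abs a}{T\Delta E^*}$) is correct.

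The gap is exactly where you locate it. Your proposed upgrade retains $B=\frac1T\int_0^T\abs{g}^2$ and reduces everything to the estimate $\abs{\Phi}^2\le\frac{4}{\Delta E^*}\int_0^T\abs{g}^2$. This is a statement about arbitrary finite sums $h(t)=\sum_k b_k e^{-i\nu_k t}$ with $\abs{\nu_k}\ge\Delta E^*$, and it is not a standard oscillatory-integral lemma; I do not see how to prove it from the frequency gap alone, and if it is true at all it would need the extra structure that the $b_k$ arise as $\braket{y}{E_k}\braket{E_k}{\psi_0}$ with unit vectors. As you say, plain Cauchy--Schwarz is too weak, and nothing in your outline supplies the replacement.

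The paper's route is different and avoids this analysis entirely. It uses one structural fact you never invoke: in the CTQW setting $\ket{y}$ and $\ket{\psi_0}$ are distinct vertex states, hence $\braket{y}{\psi_0}=0$. Writing this in the eigenbasis gives
\[
a+\sum_{k:\mathcal V_k\perp\mathcal V^*}\bra{y}\Pi_{\mathcal V_k}\ket{\psi_0}=0,
\]
so the off-resonant coefficients sum to exactly $-a$. The paper then drops the non-negative $\int\abs{g}^2$ term (via $\abs{a+b}^2\ge\abs{a}^2+2\,\mathrm{Re}(a\bar b)$) and bounds the cross term by $\frac{2}{T}\,\bigl|a\sum_k\overline{b_k}\bigr|\cdot\frac{2}{\Delta E^*}=\frac{4\abs{a}^2}{T\Delta E^*}$, which is where the second factor of $\abs a$ appears. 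In other words, the orthogonality $\braket{y}{\psi_0}=0$ is the missing ingredient that converts the $\abs a$ in your bound into $\abs a^2$, with no need to analyse $\int\abs{g}^2$ at all.
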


Note that as $T$ grows with respect to $1/\Delta E^*$, the eigenstates spanning $\mathcal V^*$ contribute to the mean probability to find $y$ after time $T$. Comparing Lemma \ref{lem:Glued} and Lemma \ref{lem:main}, we note that 
\begin{equation}
p_\infty (y|\psi_0)=\sum_{k} 
\abs{\bra{y} \Pi_{\mathcal V_k} \ket{\psi_0}}^2 
\ge
\abs{\bra{y} \Pi_{\mathcal V^*} \ket{\psi_0}}^2, 
\end{equation}
hence, for  $T\rightarrow \infty$, the asymptotic time-averaged probability to find $y$ is higher. However, Lemma \ref{lem:main} depends on $\Delta E^*$ and not on $\Delta E_{\min}$, and if for $T\approx 1/\Delta E^*$, the support of $y$ on $\mathcal V^*$ is large, an improved upper bound for the quantum hitting time is obtained. Computationally,  Lemma \ref{lem:Glued} requires knowing the smallest gap between \emph{any} two eigenvalues, while Lemma \ref{lem:main} only requires finding one eigenspace with non-negligible overlap with the initial and final states, and calculating its energy gap with respect to the rest of the spectrum. 

It is possible to generalize Lemma \ref{lem:main} further.  Consider a set $S$ of eigenstates of $H$, and let $\Pi_S$ is the projector onto the subspace spanned by $S$ such that for the problem at hand, $\nrm{\Pi_S \ket{y}}^2$ is significant (say, some constant). Ideally, in this case, it suffices to wait for a time long enough so as to dephase all the eigenstates in $S$. Formally, we require $T$ large enough so that $ \langle \rho(T) \rangle$ is a mixed state of the form
\begin{equation}
\label{eq:mixed-state}
 \langle \rho(T) \rangle = \rho_S + \rho^\perp + O(\eps),
\end{equation}
where $\rho_S$ has support only over the eigenstates in $S$, $\rho^\perp$ has support only in the orthogonal complement of $S$, and $\epsilon$ being the precision. If the eigenvalues in $S$ are separated from the rest of the eigenspace by at least $\Delta E_S$, then this is the case for 
\begin{equation}
\label{eq:time-precision}
T=O\left(\dfrac{1}{\eps}\cdot\dfrac{1}{\Delta E_S}\right).
\end{equation}

Furthermore, in order to improve the dependence on precision $\eps$, we can generalize the CTQW-scheme of Definition \ref{def:walk} slightly. Instead of evolving for a time $t\in U[0,T]$, where $U[a,b]$ is the uniform distribution in the interval $[a,b]$, we evolve the CTQW for a time chosen according to the sum of uniform random variables, known as the Irwin-Hall distribution, i.e.\ we choose $t=\sum_{j=1}^k t_j$ such that each $t_j\in U[0,T]$. Clearly for $k=1$, we get back the time-averaged evolution of Definition \ref{def:walk}. Lemma \ref{lem:better} formalizes CTQW with such time distribution (see Appendix \ref{apndx:ProofStrong} for the proof):
  

\begin{lemma} \label{lem:better}
Consider a graph $G$ and $H$ be the Hamiltonian encoding the connectivity of $G$. Consider a CTQW on $G$, starting from some state $\ket{\psi_0}$ with respect to $H$ where the evolution time is distributed according to the Irwin-Hall distribution, i.e. $t=\sum_{j=1}^k t_j$, where $t_1\dots t_k$ are i.i.d. random variables, uniformly distributed in $[0,T]$. Additionally, let $S$ be a subset of the eigenstates in $H$. Then,
\begin{equation}
\begin{split}
    \bar p_{T}(y|\psi_0) &\ge \sum_{j\in S} \abs{\braket{y}{E_j}\braket{E_j}{\psi_0}}^2 - \sqrt{3}\cdot \prnt{\frac{2}{T\Delta E_S}}^k
\end{split}
\end{equation}
wherein $\Delta E_{S}=\min_{j\in S, k}\{|E_j-E_k|\}$, namely, the minimal gap between an eigenstate in $S$ to the rest of the spectrum. Note that the maximal evolution time is $kT$.
\end{lemma}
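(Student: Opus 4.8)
The plan is to work directly with the eigenbasis expansion of the averaged success probability, as in the proofs of Lemmas~\ref{lem:Glued} and~\ref{lem:main}, but now exploiting the independence of $t_1,\dots,t_k$. Write $c_j:=\braket{y}{E_j}\braket{E_j}{\psi_0}$, so that $\langle y|e^{-iHt}|\psi_0\rangle=\sum_j c_j e^{-iE_j t}$ with $t=\sum_{m=1}^k t_m$. Expanding the square and taking the expectation over the i.i.d. variables $t_1,\dots,t_k$, each phase factorizes, giving
\[
\bar p_T(y|\psi_0)=\sum_{j,l}c_j\bar c_l\,\mathbb{E}\!\left[e^{-i(E_j-E_l)\sum_m t_m}\right]=\sum_{j,l}c_j\bar c_l\,\mu(E_j-E_l)^k,\qquad \mu(\omega):=\frac1T\intop_0^T e^{-i\omega s}\,ds=\frac{1-e^{-i\omega T}}{i\omega T},
\]
where $\mu(0)=1$ and $|\mu(\omega)|\le\min\{1,\,2/(|\omega|T)\}$. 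This is the $k$-th-power refinement of the elementary bound used for $k=1$, and the factor $\mu^k$ is exactly what produces the exponential gain in precision.

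Next I would split the amplitude using $\Id=\Pi_S+(\Id-\Pi_S)$: set $u(t)=\langle y|e^{-iHt}\Pi_S|\psi_0\rangle=\sum_{j\in S}c_j e^{-iE_j t}$ and $v(t)=\langle y|e^{-iHt}(\Id-\Pi_S)|\psi_0\rangle=\sum_{l\notin S}c_l e^{-iE_l t}$, so that
\[
\bar p_T(y|\psi_0)=\mathbb{E}\big[|u|^2\big]+\mathbb{E}\big[|v|^2\big]+2\,\mathrm{Re}\,\mathbb{E}[u\bar v]\ \ge\ \mathbb{E}\big[|u|^2\big]-2\,\big|\mathbb{E}[u\bar v]\big|.
\]
The conceptual point is this inequality: one cannot control the deviation of $\langle\rho(T)\rangle$ from the fully dephased state, because there is no global spectral gap, but the $S^\perp$-block contributes the nonnegative quantity $\mathbb{E}[|v|^2]$, which we simply discard --- leaving only index pairs that straddle $S$. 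Now $\mathbb{E}[|u|^2]=\sum_{j,l\in S}c_j\bar c_l\,\mu(E_j-E_l)^k$; splitting off the terms with $E_j=E_l$, these contribute exactly $\sum_{j\in S}|c_j|^2$ (in the nondegenerate case automatically; in the degenerate case after choosing, inside each eigenspace, the eigenbasis so that only the direction along $\Pi_{\mathcal V}|\psi_0\rangle$ carries weight), while every remaining pair in $\mathbb{E}[|u|^2]$, as well as every pair in $\mathbb{E}[u\bar v]=\sum_{j\in S,\,l\notin S}c_j\bar c_l\,\mu(E_j-E_l)^k$, has an index in $S$ and distinct energies, hence $|E_j-E_l|\ge\Delta E_S$ and $|\mu(E_j-E_l)^k|\le(2/(T\Delta E_S))^k$. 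This is where $S$ being a union of eigenspaces --- as it is in the applications --- makes $\Delta E_S$ a genuine gap for all these pairs.

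Finally I would collect the error terms using the $\ell_1$ bound $\sum_j|c_j|=\sum_j|\braket{y}{E_j}|\,|\braket{E_j}{\psi_0}|\le\nrm{\ket y}\,\nrm{\ket{\psi_0}}=1$ (Cauchy--Schwarz). Writing $p=\sum_{j\in S}|c_j|$ and $q=\sum_{l\notin S}|c_l|$ with $p+q\le1$, the unequal-energy part of $\mathbb{E}[|u|^2]$ is bounded in modulus by $(2/(T\Delta E_S))^k p^2$ and $|\mathbb{E}[u\bar v]|$ by $(2/(T\Delta E_S))^k pq$, so the total error is at most $(2/(T\Delta E_S))^k(p^2+2pq)\le (2/(T\Delta E_S))^k$, which yields the claimed inequality (in fact with constant $1$, hence certainly $\le\sqrt3$; a coarser grouping of the same terms reproduces the stated $\sqrt3$). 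The only obstacle I anticipate is bookkeeping rather than depth: tracking which pairs straddle the gap and handling degeneracies so that the equal-energy block yields precisely $\sum_{j\in S}|c_j|^2$; the substantive idea is the discard of the nonnegative $S^\perp$-block in the displayed inequality above.
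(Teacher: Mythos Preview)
Your argument is correct and in fact slightly sharper (constant $1$ rather than $\sqrt{3}$), but the route differs from the paper's. The paper does not bound the scalar $\bar p_T(y|\psi_0)$ directly. Instead it works at the level of the time-averaged density matrix $\langle\rho(T')\rangle$: it introduces the reference state $\bar\rho'=\sum_{j\in S}|c_j|^2\ketbra{E_j}{E_j}+\rho'^\perp$ (with $c_j=\braket{E_j}{\psi_0}$ and $\rho'^\perp$ the fully time-averaged $S^\perp$-block), computes $\|\langle\rho(T')\rangle-\bar\rho'\|_F$ entrywise, and bounds the sum of squares crudely by $3$ using $\sum_j|c_j|^2=1$; then $|\tr(\ketbra{y}{y}(\langle\rho(T')\rangle-\bar\rho'))|\le\|\langle\rho(T')\rangle-\bar\rho'\|_F$ gives the error term, and the nonnegativity of $\langle y|\rho'^\perp|y\rangle$ lets one drop the $S^\perp$-block --- which is exactly your ``discard $\mathbb{E}[|v|^2]$'' step in density-matrix language.

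The trade-off: your amplitude/$\ell_1$ argument is shorter and yields the better constant because Cauchy--Schwarz on $\sum_j|\braket{y}{E_j}||\braket{E_j}{\psi_0}|$ exploits both $\ket y$ and $\ket{\psi_0}$, whereas the Frobenius bound only uses normalization of $\ket{\psi_0}$. The paper's approach, on the other hand, proves an independent intermediate statement (Lemma~\ref{lem:multiple-evolutions}) that $\langle\rho(T')\rangle$ is Frobenius-close to the block-dephased state $\bar\rho'$, which is what underlies the dephasing/energy-measurement interpretation emphasized in the text and is reusable for observables other than $\ketbra{y}{y}$.
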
 

In the CTQW-based procedure of Definition~\ref{def:walk}, the time of evolution is a random variable  $t\sim U[0,T]$. As such, Lemma \ref{lem:better} corresponds to $k$ independent repetitions of the randomized time evolution, where we measure only at the end of the $k$ repetitions. Alternatively, taking $t=\sum_j t_j$  is equivalent to measuring the energy $k$ times with accuracy $\approx 1/T$, thus decaying the tails of the measurement distribution and inducing stronger decoherence.

The resultant mixed state is of the form
\begin{equation}
\label{eq:mixed-state-2}
 \langle \rho(T) \rangle = \rho_S + \rho^\perp + O(\delta^k),
\end{equation}
after a time $T'=O(kT)$. 

For example, if we choose the time $T\sim 4/\Delta E_S$, so that $\delta=1/2$, then by repeating the procedure of Lemma \ref{lem:better}, $k=\lceil\log_2 \sqrt{3}/\eps\rceil$ times, we obtain that
\begin{equation}
\label{eq:mixed-state-3}
 \langle \rho(T) \rangle = \rho_S + \rho^\perp + O(\eps).
\end{equation}
The total time of evolution
\begin{equation}
\label{eq:time-precision-improved}
T'=kT=O\left(\dfrac{1}{\Delta E_S}\log_2\dfrac{1}{\eps}\right).
\end{equation}

In the following sections we apply our bounds to improve the quantum hitting time of several crucial CTQW-based quantum algorithms.  
\section{Improved hitting time for the glued-trees quantum walk algorithm \label{sec:Glued}} 
In this section, apply the improved bounds on the quantum hitting time in Sec.~\ref{sec:improved-hitting} to the glued-trees quantum walk algorithm introduced in Ref.~\cite{CCDFGS03}. We begin by defining the problem followed by using relevant results from \cite{CCDFGS03}. 

\begin{definition} [Glued trees problem \cite{CCDFGS03}]
Consider two binary trees of depth $n$ such that the roots of the first and second tree are denoted \texttt{Entrance} and \texttt{Exit} respectively. Following the original notations, let $G_n'$ be the graph composed of the two trees, and of additional edges forming a random cycle, which alternates between leaves of the two trees. The   vertices of $G_n'$ are given unique unknown labels. The input to the problem is the label of the \texttt{Entrance} vertex, and the solution is the label of the \texttt{Exit} vertex.  The access to the 
$G_n'$ is by an oracle, which receives a vertex's label and returns the labels of its neighbours. 
\end{definition}


\begin{figure} [H]
\begin{center}
{
\begin{tikzpicture} [scale=0.7]

\node [ label={[xshift=-0.9cm, yshift=-0.27cm] \texttt{Entrance}},color={black}, fill=black, circle,draw,inner sep=0pt,minimum size =0.1cm] (bg0,0) at (0,0) {};

\node [ label={[xshift=11cm, yshift=-0.27cm] \texttt{Exit}},color={black}, fill=black, circle,draw,inner sep=0pt,minimum size =0.1cm] (bg0,0) at (0,0) {};

\foreach \x in {1,...,4}
{
\tikzmath{\z = 2^\x-1;}
\foreach \y in {0,1,...,\z}
{
\node [ color={black}, fill=black, circle,draw,inner sep=0pt,minimum size =0.1cm] (bg\x,\y) at (1.5*\x,{(\y-2^\x/2+1/2)*2^(3-\x)}) {};
\draw (1.5*\x,{(\y-2^\x/2+1/2)*2^(3-\x)}) -- ({1.5*(\x - 1)},{(floor(\y/2)-2^(\x-1)/2+1/2)*2^(4-\x)});
}
}
\foreach \x in {1,...,4}
{
\tikzmath{\z = 2^\x-1;}
\foreach \y in {0,1,...,\z}
{
\node [ color={black}, fill=black, circle,draw,inner sep=0pt,minimum size =0.1cm] (bg\x,\y) at ({1.5*(10-\x)},{(\y-2^\x/2+1/2)*2^(3-\x)}) {};
}
}
\foreach \x in {1,...,4}
{
\tikzmath{\z = 2^\x-1;}
\foreach \y in {0,...,\z}
{
\draw ({1.5*(10-\x)},{(\y-2^\x/2+1/2)*2^(3-\x)}) -- ({1.5*(11-\x)},{(floor(\y/2)-2^(\x-1)/2+1/2)*2^(4-\x)});
}
}

\foreach \y/\yy in {0/10,4/0,3/4,6/3,11/6,2/11,15/2,13/15,8/13,9/8,12/9,1/12,14/1,5/14,7/5,10/7}
{
\draw (9,{(\y-8+1/2)/2}) -- (6,{(\yy-8+1/2)/2});
\draw (9,{(\yy-8+1/2)/2}) -- (6,{(\y-8+1/2)/2});
}

\end{tikzpicture}}





\end{center}
\caption{\label{fig:G4}An instance of the graph $G_4'$}
\end{figure}
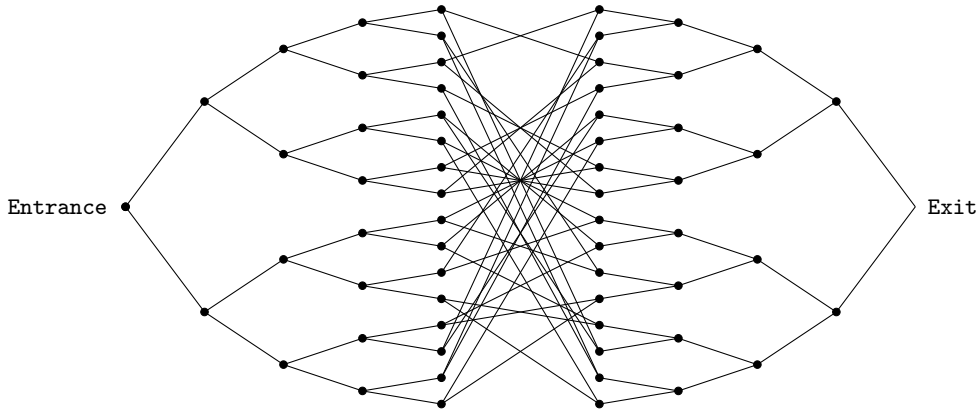

Childs et al. \cite{CCDFGS03} solved the problem using  a quantum walk on the graph $G_n'$ in polynomial time and proved an exponential speedup of their algorithm to any classical algorithm. For convenience, they have analyzed $G_{n-1}'$ instead of $G_n'$. From the symmetry of the trees and the initial state at the root, it is clear that vertices in the same depth of each tree will share the same (time-dependent) amplitude.  Following this observation, the authors defined the column states $\ket{\operatorname{col} j}$ to be an even superposition of all states with distance $j-1$ from \texttt{Entrance}. The walk is bounded to a $2n$ dimensional subspace spanned by these column states, and in it, the non-zero matrix elements of $H$ corresponding to $G_{n-1}'$ are:
\begin{equation}
\langle\operatorname{col} (j)|H| \operatorname{col}(j+1)\rangle=\langle\operatorname{col} (j+1)|H| \operatorname{col}(j)\rangle=\left\{\begin{array}{ll}{1} & {1 \leq j \leq n-1, \quad n+1 \leq j \leq 2 n-1} \\ {\sqrt{2}} & {j=n}\end{array}\right.
\end{equation}
Note that the states with $j=1,2n$ correspond to \texttt{Entrance} and \texttt{Exit} respectively. So the initial state $\ket{\psi_0}=\ket{\col\ 1}$ while the target state is $\ket{y}=\ket{\col\ 2n}$. 

The runtime of the algorithm was bounded using Lemma \ref{lem:Glued}. In order to compare the performances of Lemma \ref{lem:Glued} and our Lemma \ref{lem:main}, we repeat the essence of the spectral analysis of $H$; for full details see \cite{CCDFGS03}. 

The  eigenstates of $H$ take one of two forms \footnote{There are two additional eigenstates where $\sin$ is replaced by $\sinh$ in Eq. \ref{eq:ColH}. But $\sinh$ causes  $\abs{\bra{\mathrm {col~1}}\Pi_{\mathcal V}\ket{\mathrm{col~2n}}}$ to be exponentially small for both eigenstates, thus making it useless for our Lemma \ref{lem:main}. See Section D in \cite{CCDFGS03}.
}:
\begin{equation} \label{eq:ColH}
\begin{split}
\ket{E_p} =  \alpha_p \sum_{j=1}^n \sin p j &\ket{\operatorname{col} j} \pm \alpha_p\sum_{j=n+1}^{2n}  \sin (p(2 n+1-j)) \ket{\operatorname{col} j}
\\
&\alpha_p = \frac{1}{\sqrt{2\sum_{j=1}^n \sin^2(pj)}} ,
\end{split}
\end{equation}
with the respective eigenvalue $E_p=2\cos p$. Here, $p$ is the solution to one of the following two equations:
\begin{equation} \label{eq:psqrt2}
\frac{\sin ((n+1) p)}{\sin n p}=\pm \sqrt{2}.
\end{equation}
 A solution for  Eq. \eqref{eq:psqrt2} with the plus (minus) sign will correspond to an eigenvector with a plus (minus) sign in Eq. \eqref{eq:ColH}. 
 
 For gap calculations, \cite{CCDFGS03} proved that the solutions for $p$ corresponding to $+\sqrt{2}$ and $-\sqrt{2}$ interleave, and take  the form $p=\pi\ell/n+\delta$ and $p=\pi\ell/n-\delta$ respectively. Here, $\ell=1,2,\cdots, n-1$ and $\delta$ is a function of $n,\ell$. 
\subsection{Hitting-time bound using Lemma \ref{lem:Glued} \cite{CCDFGS03}}
In \cite{CCDFGS03}, the authors wrote an approximated solution to $p$ and proved that the minimal energy gap is around  $\ell=1$ and equals $\Delta E = \Omega(n^{-3})$.  
Recall that $f(n)=\Omega(g(n))$ iff there are constants $N_0,c>0$ such that for any $n>N_0$,  $\abs{f(n)}\ge c\abs{g(n)}$. 
By using Eq. \eqref{eq:ColH} and the Cauchy-Schwartz inequality, one can see that
\begin{equation}
\sum_k \abs{\bra{y}\Pi_{\mathcal V_k}\ket{\psi_0}}^2= \sum_{p} \abs{\braket{E_p}{\mathrm{col}~1}}^2 \abs{\braket{E_p}{\mathrm{col}~2n}}^2  =  \sum_{p} \abs{\braket{E_p}{\mathrm{col}~1}}^4  \ge \prnt{\sum_p \frac{1}{\sqrt{2n}} \abs{\braket{E_p}{\mathrm{col} ~1}}^2}^2 =\frac{1}{2n}
\end{equation}
Hence by Lemma \ref{lem:Glued}, 
\begin{equation} \label{eq:box1}
\bar{p}_T(\mathrm{col}~2n ~|~\mathrm{col}~1) \ge \sum_{p} \abs{\braket{E_p}{\mathrm{col}~1}}^2 \abs{\braket{E_p}{\mathrm{col}~n}}^2 - \frac{2}{T\Delta E} \ge \frac{1}{2n} - O\prnt{\frac{n^3}{T}}
\end{equation}
By this bound, $T$ should be of order $n^4$  to find \texttt{Exit} with probability $\approx 1/n$. Hence, the hitting time according to \cite{CCDFGS03} is $O(n^5)$.

\subsection{Hitting-time bound using Lemma \ref{lem:main}}
Lemma \ref{lem:main} requires a single eigenstate with a large gap. The following claim (see proof  Appendix \ref{apndx:claim}) is useful for that purpose \footnote{The energy gap approximation at \cite{CCDFGS03} is incorrect for $\ell=\Theta(n)$. The mistake is in the asymptotic analysis of Eq. 55. on page 15.}:   
\begin{claim} \label{clm:clm} 
The energy gap for energy levels corresponding to $p=\frac{\ell \pi}{n}-\delta$, where $\ell=\Theta(n)$, is proportional to $1/n$.
\end{claim}

Let $\mathcal V^*$  be a subspace spanned by the eigenstate corresponding to $p=\frac{\ell \pi}{n}-\delta$ where $\ell=n/2$. By Claim \ref{clm:clm},  the energy gap $\Delta E^*$ is proportional to 1/n.
. 
 The other part of Eq. \eqref{eq:Lemmain} to bound is:
\begin{equation}
\begin{split}
\abs{\langle \mathrm{col} 1|\Pi_{{\mathcal V^*}}|\mathrm{col} ~2n\rangle}&= \alpha_p^2 \sin^2 p= \alpha_p^2 \sin^2(\pi/2 +O(1/n)) > \alpha_p^2/2
\\
\alpha_p &= \frac{1}{\sqrt{2\sum_{j=1}^n \sin^2(pj)}} > \frac{1}{\sqrt{2n}}
\end{split}
\end{equation}
wherein $\alpha_p$ is the normalization factor of $\ket{E_p}$. Hence, by Lemma \ref{lem:main}:
\begin{equation} \label{eq:box2}
\bar{p}_{T}(\mathrm{col}~2n~|~\mathrm{col}~1)\ge \frac{1}{8n^2} (1- O(n/T))
\end{equation}
One can see that for $T\approx n$, the inequality in Eq. \eqref{eq:box1} is trivial, while by Eq. \eqref{eq:box2}, the probability  to find \texttt{Exit} is $\approx 1/n^2$. Hence, this improves the hitting time to $O(n^3)$. Next we prove that this can be improved further by using Lemma \ref{lem:better}.  

\subsection{Hitting-time bound using Lemma \ref{lem:better}}

Here we shall demonstrate that the procedure described in Lemma  \ref{lem:better} can be used to obtain the probability of finding the $\mathtt{Exit}$ vertex from the $\mathtt{Entrance}$ vertex for the glued trees algorithm, offering improvements over prior results. 

Recall that the $+\sqrt{2}$ solutions and $-\sqrt{2}$ solutions of Eq.~\eqref{eq:psqrt2} interleave and are satisfied by 
\begin{equation}
p=\dfrac{\ell \pi}{n}\pm \delta,
\end{equation}
where the positive sign corresponds to the $+\sqrt{2}$ solutions while the negative sign corresponds to the $-\sqrt{2}$ solutions. 

In order to use Lemma \ref{lem:better}, consider that $\lceil n/4 \rceil \leq \ell \leq \lceil 3n/4 \rceil$. If $\gamma=\ell\pi/n$, then from Appendix \ref{apndx:claim}, we have that for a $-\sqrt{2}$ solution,
\begin{align}
\tan (n\delta)&=\dfrac{\sin\gamma}{\sqrt 2 + \cos\gamma}+O(1/n).
\end{align}
For any $\ell$ in the aforementioned range, we have that $\sin \gamma \geq 1/\sqrt{2}$ and $\cos\gamma \leq 1$. Thus, for any corresponding $p$, we obtain that
\begin{equation}
\label{eq:lower-bound-delta}
\delta\geq \dfrac{\pi}{12 n}.
\end{equation} 
This lower bound for $\delta$ also holds for $+\sqrt{2}$ solutions for $p$ such that $\lceil n/4 \rceil \leq \ell \leq \lceil 3n/4 \rceil$. In order to make use of Lemma \ref{lem:better}, we need to define a set $S$ obtain $\Delta E_S$. In what follows, we prove the following claim
 
\begin{claim} \label{clm:clm-mingap} 
Let $S$ be a subset of eigenvalues $E_p$ of the Hamiltonian $H$ of the glued trees graph such that $\lceil n/4 \rceil \leq \ell \leq \lceil 3n/4 \rceil$. Then, 
\begin{equation}
\Delta E_S \geq \dfrac{\pi}{3\sqrt{2} n}.
\end{equation}
\end{claim}
\begin{proof}

We shall show that the aforementioned lower bound for $\Delta E_S$ holds for $p$ corresponding to the $-\sqrt{2}$ solutions of Eq.~\eqref{eq:psqrt2} for $\ell$ in this range. Similar result also holds for the $+\sqrt{2}$ solutions.

The absolute value of the eigenvalue gap between $E_p=2\cos p$ and the nearest eigenvalue to its left is given by
\begin{align}
\Delta E_{p,\mathrm{left}} &= 4\sin \left(\dfrac{\ell\pi}{n}-\dfrac{\pi}{2n}\right)\sin \left(\dfrac{\pi}{2n}-\delta\right)\\
                           &\geq 4\sin \left(\dfrac{\pi}{4}-\dfrac{\pi}{2n}\right)\sin \left(\dfrac{\pi}{2n}-\delta\right)\\
                           & \geq 2\sin \left(\dfrac{\pi}{2n}-\delta\right)~~~~~~~~~[\mathrm{Using }\sin\left(\pi/4-2n\right)\geq 1/2,\text{~holds for any $n\geq 6$}]\\
                           & \geq 2 \left(\dfrac{\pi}{2n}-\delta\right)\geq \dfrac{\pi}{4n},
\end{align}
where in the last line we have used the fact that $\sin x \geq x$ and $\delta\leq \frac{3\pi}{8n}$.

Similarly, the eigenvalue gap between $E_p$ and the nearest eigenvalue to its right is given by
\begin{align}
\Delta E_{p,\mathrm{right}} &=4 \sin\left(\dfrac{\ell\pi}{n}\right)\sin\delta\\
                           &=\dfrac{4\delta}{\sqrt{2}}\geq \dfrac{\pi}{3\sqrt{2}n},
\end{align} 
where in the last line we have used the lower bound for $\delta$ in Eq.~\eqref{eq:lower-bound-delta}. The claim follows by considering the minimum of $\Delta E_{p,\mathrm{left}}$ and $\Delta E_{p,\mathrm{right}}$.
\end{proof}
~\\
The improved CTQW-based algorithm to find the \texttt{Exit} vertex of the glued trees graph is summarized via Algorithm \ref{algo:improved-glued-trees}.
~\\~\\
\RestyleAlgo{boxruled}
\begin{algorithm}[ht]
  \caption{Improved glued-trees algorithm}\label{algo:improved-glued-trees}
  Choose $T=12\sqrt{2}n/\pi$ and $k=\lceil \log_2 (5\sqrt{3}n)\rceil$.
\begin{itemize}
\item Repeat the following $20n$-times.
\begin{itemize}
\item[a.~] Start from the \texttt{Entrance} vertex, i.e.\ $\ket{\psi_0}=\ket{\col\ 1}$.
\item[b.~] Evolve according to $H$ for some $t\in U[0,T]$.
\item[c.~] Repeat Step b. $k$ times.
\item[d.~] Measure in the vertex basis.
\end{itemize}
\item Output the result of the final measurement. 
\end{itemize} 
 \end{algorithm}

\begin{corollary} \label{cor:cor-alter*}
The hitting time for the glued trees quantum walk algorithm (Algorithm \ref{algo:improved-glued-trees}) is $O\left(n^2\log n\right)$, an improvement over $O(n^5)$, the previous bound proved in \cite{CCDFGS03}.
\end{corollary}
~\\
\begin{proof}
We simply apply Lemma \ref{lem:better}.
$$
 \sum_{p:E_p\in S} \abs {\braket{\col\ 2n}{E_p}\braket{E_p}{\col\ 1}}^2=\sum_{p:E_p\in S}|\alpha_p|^4\geq \dfrac{1}{4n},
$$
and by choosing $T=12\sqrt{2}n/\pi$ and $k=\lceil \log_2 (5\sqrt{3}n)\rceil$
\begin{equation}
    \sqrt{3}\prnt{\frac{2}{T\Delta E_S}}^k \leq \frac{1}{5n}.
\end{equation}
So,
\begin{equation}
   \bar{p}_{T}(\mathrm{col}~2n~|~\mathrm{col}~1) \geq \frac{1}{20 n}.
\end{equation}
Hence, the probability of the time-averaged CTQW whose time of evolution is chosen for some time $T'=kT$ according to the Irwin-Hall distribution to find the exit node is in $\Omega(1/n)$, and therefore the hitting time is in $O(n^2 \log n)$. 
\end{proof}
\section{Spatial search by continuous-time quantum walk}
\label{sec:search}

In this section,  we focus on the problem of finding a marked element on a Markov chain, known as the spatial search problem. The first CTQW-based algorithm by Childs and Goldstone \cite{CG04} could find a marked node on specific graphs of $n$ nodes such as the complete graph, hypercube and others in $O(\sqrt{n})$ time, offering a quadratic advantage over classical random walks. However, it fails to achieve any quantum speedup for other graphs such as lattices of dimension less than four. Finding out the necessary and sufficient conditions for this algorithm to be optimal for any graph, had been a long-standing open problem and considerable progress has been made recently in this regard \cite{CNAO16,CNR20}. However, the Childs and Goldstone algorithm cannot solve the spatial search problem for any ergodic, reversible Markov chain quadratically faster than its classical counterpart \cite{CNR20}. In fact recently, in Ref.~\cite{CNR18}, a new CTQW-based algorithm was developed which could find a marked node $v$ on any ergodic, reversible Markov chain $P$ in a time that is in $O\left(\sqrt{HT(P,v)}/\epsilon\right)$ with a success probability of at least $1/4-\eps$ where $HT(P,v)$ is the hitting time of a classical random walk on $P$ with respect to $v$. 

In the scenario where multiple vertices are marked, the algorithmic running time depends on a quantity known as the \textit{extended hitting time}. Given a set $M$ of marked elements, the algorithm runs in $O\left(\sqrt{HT^+(P,M)}/\epsilon\right)$, where $HT^+(P,M)$ is the extended hitting time of $P$ with respect to $M$. For instances where only a single element is marked, i.e. $|M|=1$, $HT^+(P,M)=HT(P,M)$. However, for multiple marked vertices, $HT^+(P,M)$ can be significantly greater than the hitting time.

In the framework of discrete-time quantum walks however, there exist quantum algorithms that solve the spatial search problem on any ergodic reversible Markov chain in $O\left(\sqrt{HT^+(P,M)}\log 1/\eps\right)$ time \cite{MNRS11, KMOR16}. In this section, we apply Lemma \ref{lem:multiple-evolutions}, to exponentially improve the dependence of the running time of the CTQW-based spatial search algorithm in Ref.~\cite{CNR18} on $\epsilon$ so that it has a matching running time with its discrete-time counterpart.

We begin by first briefly discuss some properties of Markov chains that we shall require for our analysis and then define a Hamiltonian corresponding to a CTQW on the edges of any Markov chain.
\\~\\
\textbf{Some basics on Markov chains:~} A Markov chain on a discrete state space $X$, such that $|X|=n$, can be described by a $n\times n$ stochastic matrix $P$ such that each entry $p_{k,l}$ of this matrix $P$ represents the probability of transitioning from state $k$ to state $l$. Any pair $(k,l)\in X \times X$, such that $p_{k,l}\neq 0$ is an edge of $P$.

Throughout this section, we shall focus our attention on ergodic, reversible Markov chains. This implies we focus our attentions on Markov chains $P$ whose eigenvalues lie between $-1$ and $1$, and  has a unique stationary state $\pi$ such that $\pi P=\pi$. The stationary state $\pi$ is a stochastic row vector and has support on all the elements of $X$. 

Let us denote it as 
\begin{equation}
\label{eqm:stationary-state-classical}
\pi=\left(\pi_1~~\pi_2~~\cdots~~\pi_n\right),
\end{equation}
such that $\sum_{j=1}^n \pi_j=1$. Also, we shall map $P\mapsto (I+P)/2$ to ensure that all the eigenvalues of $P$ lie between $0$ and $1$. This will not affect our results other than by a factor of two. 

An important quantity throughout this work is the gap between the two highest eigenvalues of $P$ (the spectral gap), which we denote by $\Delta$. 

Let $C$ be an $m\times n$ positive matrix. Then we define
\begin{equation}
\label{eq:sq-root-definition}
B=\sqrt{C},
\end{equation}
as the $m\times n$ positive matrix such that its $(i,j)^\mathrm{th}$ entry, $B_{ij}=\sqrt{C_{ij}}$. 

Following this definition, consider the discriminant matrix of $P$ which is defined as 
\begin{equation}
\label{eq:discriminant-matrix-definition}
D(P)=\sqrt{P\circ P^T},
\end{equation}
where $\circ$ indicates the Hadamard product and the $(x,y)^{\mathrm{th}}$ entry of $D(P)$ is $D_{xy}(P)=\sqrt{p_{xy}p_{yx}}$. Thus $D(P)$ is a symmetric matrix. For any ergodic, reversible Markov chain $P$, $D(P)$ is in fact \textit{similar} to $P$, i.e.\ they have the same set of eigenvalues \cite{KMOR16}. So if the eigenvalues of $P$ are ordered as $\lambda_n=1 > \lambda_{n-1}\geq\cdots\geq \lambda_1$, the spectral decomposition of $D(P)$ is
\begin{equation}
\label{eq:discriminant-matrix-spectral}
D(P)=\sum_{i=1}^{n}\lambda_i\ket{v_i}\bra{v_i},
\end{equation}
where $\ket{v_i}$ is an eigenvector of $D(P)$ with eigenvalue $\lambda_i$. Importantly, the eigenstate of $D(P)$ with eigenvalue $1$ is related to the stationary distribution of $P$, i.e.\
\begin{equation}
\label{eqmain:initial-stationary-state}
    \ket{v_n}=\sqrt{\pi^T}=\sum_{x\in X}\sqrt{\pi_x}\ket{x}.
\end{equation}
~\\
\textbf{Interpolated Markov chains}:~ Let $M\subset X$ denote the set of marked elements of the Markov chain $P$. Then given any $P$, we define $P'$ as the \textit{absorbing Markov chain} obtained from $P$ by replacing all the outgoing edges from $M$ by self-loops. Then the \textit{interpolated Markov chain} is defined as 
\begin{equation}
\label{eqmain:interpolated-mc-defintion}
P(s)=(1-s)P+sP',
\end{equation}
where $s\in[0,1]$. Clearly, $P(0)=P$ and $P(1)=P'$. Let us denote the spectral gap of $P(s)$ as $\Delta(s)$. 
Let us also define $p_M=\sum_{x\in M}\pi_x$ as the probability of obtaining a marked element in the stationary state of $P$. 

It is fair to assume that $p_M\leq 1/4$ as otherwise, a marked element can be instantaneously obtained by simply sampling from the stationary distribution itself without requiring us to run the spatial search algorithm.  

As before, the discriminant matrix of $P(s)$ is defined as 
\begin{equation}
\label{eq:discriminant-matrix-definition}
D(P(s))=\sqrt{P(s)\circ P(s)^T},
\end{equation}
where $\circ$ indicates the Hadamard product and the spectral decomposition of $D(P(s))$ is
\begin{equation}
\label{eq:discriminant-matrix-spectral_s}
D(P(s))=\sum_{i=1}^{n}\lambda_i(s)\ket{v_i(s)}\bra{v_i(s)},
\end{equation}
where $\ket{v_i(s)}$ is an eigenvector of $D(P(s))$ with eigenvalue $\lambda_i(s)$, such that $\lambda_n(s)=1 > \lambda_{n-1}(s)\geq\cdots\geq \lambda_1(s)$. Then the $1$-eigenstate of $D(P(s))$ can be expressed as 
\begin{align}
\label{eq:highest-eigenstate-discriminant-matrix}
\ket{v_n(s)}&=\ket{\pi(s)}=\sum_{x\in X}\sqrt{\pi_x(s)}\ket{x}\\
            &=\sqrt{\dfrac{(1-s)(1-p_M)}{1-s(1-p_M)}}\ket{U}+\sqrt{\dfrac{p_M}{1-s(1-p_M)}}\ket{M},
\end{align} 
where $\ket{U}$ and $\ket{M}$ are defined as
\begin{align}
\ket{U}&=\frac{1}{\sqrt{1-p_M}}\sum_{x\notin M}\sqrt{\pi_x}\ket{x}\\
\ket{M}&=\frac{1}{\sqrt{p_M}}\sum_{x\in M}\sqrt{\pi_x}\ket{x}.
\end{align}  

~\\
\textbf{Interpolated hitting time and Extended hitting time:} For any interpolated Markov chain $P(s)$, one can define a quantity known as the \textit{interpolated hitting time}~\cite{KMOR16,CNR18} as follows: 
\begin{equation}
\label{eq:interpolated-hitting-time}
HT(s)=\sum_{j=1}^{n-1}\dfrac{|\braket{v_j(s)}{U}|^2}{1-\lambda_j(s)}.
\end{equation}
The spectral gap of $P(s)$ is related to $HT(s)$ by the inequality 
\begin{equation}
\label{eq:iht-spectral-gap}
HT(s)\leq \dfrac{1}{\Delta(s)}\sum_{j=1}^{n-1}|\braket{v_j(s)}{U}|^2.
\end{equation}
For the spatial search algorithm, we shall find that the quantity of interest is the \textit{extended hitting time}. 

The \textit{extended hitting time} of $P$ with respect to a set $M$ of marked elements is defined as
\begin{equation}
\label{eq:extended-hitting-time}
HT^+(P,M)=\lim_{s\rightarrow 1} HT(s),
\end{equation}

For $|M|=1$, we have that $HT^+(P,M)=HT(P,M)$. Krovi et al. proved an explicit relationship between $HT(s)$ and $HT^+(P,M)$ \cite{KMOR16}. 

They showed that
\begin{equation}
\label{eq:interpolated-vs-extended-hitting-time}
HT(s)=\dfrac{p_M^2}{\left(1-s(1-p_M)\right)^2}HT^+(P,M).
\end{equation} 

Combining Eqs.~\eqref{eq:iht-spectral-gap} and \eqref{eq:interpolated-vs-extended-hitting-time}, we have
\begin{equation}
\label{eq:hitting-time-vs-gap}
HT^+(P,M)\leq \dfrac{1}{\Delta(s)}.\dfrac{\left(1-s(1-p_M)\right)^2}{p_M^2}\sum_{j=1}^{n-1}|\braket{v_j(s)}{U}|^2.
\end{equation} 
As we shall show subsequently, for our spatial search algorithm, we would choose a particular value of $s=s^*=1-p_M/(1-p_M)$  for which we have
\begin{equation}
\label{eq:hitting-time-vs-gap-s}
\dfrac{1}{\Delta(s^*)} \geq HT^+(P,M)/8.
\end{equation}

~\\
\textbf{Search Hamiltonian $H(s)$:~} Following Ref.~\cite{CNR18}, we define a Hamiltonian $H$, corresponding to a CTQW on the edges of $P$. Let us consider a Hilbert space $\mathcal{H}\otimes\mathcal{H}$, where $\mathcal{H}=\mathrm{span}\{\ket{x}: x\in X \}$.
Also, let $p_{xy}$ denote the $(x,y)^{\mathrm{th}}$-entry of $P$ and let $E$ be the set of edges of $P$. Define the unitary $V$ acting on $ \mathcal{H} \otimes \mathcal{H} $ such that for all $x\in X$,
\begin{equation}
\label{eq:unitary-for-hamiltonian}
V(s)\ket{x,0}=\sum_{y\in X}\sqrt{p_{xy}(s)}\ket{x,y},
\end{equation}
where the state $\ket{0}$ represents a fixed reference state in $\mathcal{H}$. Let us also define the swap operator 
\begin{equation}
 S\ket{x,y}=
\begin{cases}
\ket{y,x}, & \text{if $(x,y)\in E$} \\
\ket{x,y}, & \text{otherwise}.
\end{cases}
\end{equation}

Then the search Hamiltonian is defined as 
\begin{equation}
\label{eq:search-hamiltonian}
H=i[V(s)^\dag S V(s),\Pi_0],
\end{equation}
where $\Pi_0=I\otimes \ket{0}\bra{0}$. Crucially, the spectrum of $H$ is related to the spectrum of the discriminant matrix $D(P(s))$ and has been extensively explored in \cite{CNR18}. Here we simply state the results required for our subsequent analysis. Observe that 
\begin{equation}\label{eqmain:vn_eigenstate}
H\ket{v_n(s),0}=0,
\end{equation}
i.e.\ the $1$-eigenstate of $D(P(s))$, $\ket{v_n(s),0}$ is an eigenstate of $H(s)$ with eigenvalue $0$.

Furthermore, for $1\leq k \leq n-1$ we have the following eigenstates and eigenvalues of $H(s)$:
\begin{equation}\label{eqmain:spectrumSOham}
\ket{\Psi^\pm_k(s)}=\dfrac{\ket{v_k(s),0}\pm i\ket{v_k(s),0}^\perp}{\sqrt{2}},~~E^\pm_k=\pm \sqrt{1-\lambda_k(s)^2},
\end{equation}
where $\ket{v_k(s),0}^\perp$ is a quantum state such that $\Pi_0 \ket{v_k(s),0}^\perp = 0$.
This analysis gives us $2n-1$ out of the $n^2$ eigenvalues of $H$. It can be seen that the remaining $(n-1)^2$ eigenvalues are all $0$ and are not relevant as the algorithmic dynamics is always restricted to a subspace that is orthogonal to it.

Finally, it is important to remark that this construction of $H$ ensures that the spectral gap between the $0$ eigenvalue of $H$, which encodes the stationary state of $P$, and the rest of its eigenvalues is given by
\begin{equation}
\sqrt{1-\lambda_{n-1}(s)^2}=\Theta(\sqrt{\Delta(s)}),    
\end{equation}
i.e.\ the gap between the $0$-eigenstate of $H(s)$ and the rest is the square root of the spectral gap of $D(P(s))$. This amplification of the spectral gap is crucial for our subsequent analysis. Finally, it has also been shown that $H(s)$ corresponds to a continuous-quantum walk on the edges of $P(s)$ and we shall use this Hamiltonian in conjunction with Lemma \ref{lem:better} to improve the running time of the CTQW-based spatial search algorithm on any ergodic, reversible Markov chain.  
\\~\\
\textbf{The spatial search algorithm and its running time:~} The problem of finding an element in a marked set of vertices of a Markov chain, known as the spatial search problem, can be tackled using CTQWs. Here, we improve the running time of the spatial search algorithm of Ref.~\cite{CNR18} by exponentially improving its dependence on precision. 

Suppose that we are given any ergodic, reversible Markov chain $P$ with state space $X$ and a set $M\subset X$ of marked elements where the goal is to find some marked vertex $v\in M$.  

We shall use the CTQW-scheme of Definition \ref{def:walk} along with the Hamiltonian $H(s)$ defined in Eq.~\eqref{eq:search-hamiltonian}.
\RestyleAlgo{boxruled}
\begin{algorithm}[ht]
  \caption{Quantum spatial search by CTQW}\label{algo:search}
Consider the Hamiltonian $H(s)=i[V(s)^\dag S V(s), \Pi_0]$.  
  \begin{itemize}
  \item[1.]~Prepare the state $\ket{\pi(0),0}$.\\
  \item[2.]~For $s^*=1-p_M/(1-p_M)$, $\eps\in(0,1/4)$, $k=\lceil\log_2 \sqrt{3}/\eps\rceil$ and $T=O(\sqrt{HT^+(P,M)})$, evolve according to \\ $H(s^*)$ for a time $T'=\sum_{j=1}^k t_j$, where each $t_j$ is chosen uniformly at random between $[0,T]$.\\
  \item[3.]~Measure in the basis spanned by the state space, in the first register.
  \end{itemize}
\end{algorithm}
 The algorithm of Ref.~\cite{CNR18} is quite simple : The initial state of the algorithm is the coherent encoding of the stationary distribution of $P(0)=P$, i.e.
\begin{equation}
\ket{\psi_0}=\ket{\pi(0),0}=\sum_{j\in X}\sqrt{\pi_j}\ket{j,0}.
\end{equation}
We choose $s=s^*=1-p_M/(1-p_M)$ and evolve $\ket{\psi_0}$ for some time $t\in U[0,T]$, where $T$ is to be determined later.

Now we can apply Lemma \ref{lem:main} with $\mathcal{V}^*=\ket{v_n(s^*),0}\bra{v_n(s^*),0}$ except that now it suffices to output any marked element in $M$. So it suffices to calculate the success probability as
\begin{equation}
\label{eq:succ-prob}
p_{\mathrm{succ}}=\sum_{x\in M} \bar{p}_{T}(x|\psi_0)\geq \sum_{x\in M}\abs{\braket{x}{v_n(s^*)}\braket{v_n(s^*)}{\pi(0)}}^2.
\end{equation}
Now, 
$$
\Delta E^* = \sqrt{1-\lambda^2_{n-1}(s^*)}\leq \sqrt{2\Delta(s^*)}.
$$

Observe that for the chosen value of $s=s^*$, we have that
\begin{itemize}
\item The eigenstate $\ket{v_n(s^*)}=\dfrac{1}{\sqrt{2}}\left(\ket{U}+\ket{M}\right)$.\\
\item $\abs{\braket{v_n(s^*)}{\pi(0)}}^2\geq 1/2$.\\
\item For any $x\in M$, $\abs{\braket{x}{v_n(s^*)}}^2= \dfrac{\pi_x}{2p_M}$.\\
\end{itemize}

So from Eq.~\eqref{eq:Lemmain}, for any $x\in M$ we have
\begin{equation}
\bar{p}_{T}(x|\psi_0)\ge\abs {\braket{x}{v_n(s^*)}\braket{v_n(s^*)}{\pi(0)}}^2  \prnt{1-\frac{4}{T\Delta E^*}} \geq \dfrac{\pi_x}{4p_M}\prnt{1-\frac{4}{T\Delta E^*}}.
\end{equation}

The success probability
\begin{equation}
p_{\mathrm{succ}}=\sum_{x\in M} \bar{p}_{T}(x|\psi_0)\geq \dfrac{1}{4}\prnt{1-\frac{4}{T\Delta E^*}}
\end{equation}

So for any $\eps \in (0,1)$ we ensure that the success probability is 
\begin{equation}
p_{\mathrm{succ}}\geq \dfrac{1}{4}\prnt{1-\eps},
\end{equation}
by choosing any 
\begin{align}
T&\geq \dfrac{1}{\eps}\cdot\dfrac{4}{\Delta E^*}\\
 &\geq \dfrac{1}{\eps}\cdot\dfrac{2\sqrt{2}}{\sqrt{\Delta(s^*)}}\\
 &\geq  \dfrac{1}{\eps}\cdot\sqrt{\dfrac{HT^+(P,M)}{8}},
\end{align}
where in the last line we have used the inequality in Eq.~\eqref{eq:hitting-time-vs-gap-s}. Thus we recover the running time of the spatial search algorithm of \cite{CNR18}. 

In order to improve the dependence on precision, consider Algorithm \ref{algo:search}. Formally, we prove that
~\\
\begin{lemma}
\label{lem:search}
\label{thm-main:search-phase-randomization}
For any ergodic, reversible Markov chain $P$ with a set $M$ of marked elements, Algorithm \ref{algo:search} has a success probability of 
$$p_{\mathrm{succ}}\geq 1/4-\eps,$$
for
$$T= O\left(\sqrt{HT^+(P,M)}\log \dfrac{1}{\eps}\right),$$ 
where $HT^+(P,M)$ is the extended hitting time of a random walk on $P$ with respect to $M$.  
\end{lemma}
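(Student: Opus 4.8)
The plan is to apply Lemma~\ref{lem:better} with the distinguished subspace $S$ taken to be the single eigenspace of $H(s^*)$ corresponding to eigenvalue $0$, i.e.\ $S=\{\ket{v_n(s^*),0}\}$, exactly as in the single-shot analysis that recovered the bound of Ref.~\cite{CNR18} above, but now exploiting the $k$-fold convolution structure of the evolution time to boost the error term from $O(1/(T\Delta E^*))$ to $O\big((1/(T\Delta E^*))^k\big)$. First I would record the three quantities that feed into Lemma~\ref{lem:better}: the overlap term $\abs{\braket{v}{v_n(s^*),0}}^2 \cdot \abs{\braket{v_n(s^*),0}{\pi(0),0}}^2$, the gap $\Delta E_S = \sqrt{1-\lambda_{n-1}(s^*)^2} = \Theta(\sqrt{\Delta(s^*)})$, and the number of repetitions $k$. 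From Eq.~\eqref{eq:highest-estate-s*} we have $\abs{\braket{v}{v_n(s^*)}}^2 = 1/2$, and $\abs{\braket{v_n(s^*)}{\pi(0)}}^2 \geq 1/2$ (this is the standard overlap estimate from Ref.~\cite{CNR18}, following from $s^* = 1 - \pi_v/(1-\pi_v)$), so the leading term of Lemma~\ref{lem:better} is at least $1/4$.

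Next I would choose the parameters. Taking $T = c\,\sqrt{HT(P,v)}$ for a suitable constant $c$ makes $2/(T\Delta E_S) \leq 1/2$, using $\Delta(s^*) = \Omega(1/HT(P,v))$ from Refs.~\cite{CNR18,CLR19} together with $\Delta E_S = \Theta(\sqrt{\Delta(s^*)})$. Then setting $k = \lceil \log_2(\sqrt{3}/\eps)\rceil = O(\log(1/\eps))$ forces the error term $\sqrt{3}\cdot(2/(T\Delta E_S))^k \leq \sqrt{3}\cdot 2^{-k} \leq \eps$. Plugging into Lemma~\ref{lem:better} gives
\begin{equation}
\bar{p}_T(v|\psi_0) \geq \frac{1}{4} - \eps.
\end{equation}
Finally, the total evolution time is $T' = kT = O\big(\sqrt{HT(P,v)}\,\log(1/\eps)\big)$, which is the claimed bound; since the success probability is a constant $\Theta(1)$ (for $\eps$ bounded away from $1/4$), the hitting time in the sense of Definition~\ref{def:HittingTime} is of the same order up to the constant number of repetitions in Algorithm~\ref{algo:search}.

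I do not anticipate a deep obstacle here — the argument is essentially a bookkeeping exercise once Lemma~\ref{lem:better} is in hand — but there are two points that need care. The first is justifying $\abs{\braket{v_n(s^*),0}{\pi(0),0}}^2 \geq 1/2$: this requires the overlap computation of Ref.~\cite{CNR18} relating $\ket{\pi(0)}$ and $\ket{v_n(s^*)}$ at the interpolation point $s^*$, and one should either cite it cleanly or reproduce the short calculation using Eq.~\eqref{eq:highest-eigenstate-discriminant-matrix} and the explicit form of $\pi(s^*)$. The second is checking that Lemma~\ref{lem:better} genuinely applies with $S$ a single eigenspace: one must confirm that $\Delta E_S$ as defined there (the minimal gap from the $0$-eigenvalue to the rest of the spectrum of $H(s^*)$) is indeed $\sqrt{1-\lambda_{n-1}(s^*)^2}$ and not accidentally smaller due to the $(n-1)^2$-fold degenerate $0$-eigenspace discussed after Eq.~\eqref{eqmain:spectrumSOham} — but since the dynamics stays in the subspace orthogonal to that degenerate part and $\ket{v_n(s^*),0}$ is the unique relevant $0$-eigenstate there, the gap is as stated. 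With these two caveats addressed, the proof is a direct substitution into Lemma~\ref{lem:better}.
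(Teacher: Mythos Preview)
Your proposal is correct and follows essentially the same approach as the paper: apply Lemma~\ref{lem:better} with $S=\{\ket{v_n(s^*),0}\}$, use the overlap bounds $\abs{\braket{v}{v_n(s^*)}}^2=1/2$ and $\abs{\braket{v_n(s^*)}{\pi(0)}}^2\ge 1/2$ for the leading term, take $\Delta E_S=\Theta(\sqrt{\Delta(s^*)})=\Omega(1/\sqrt{HT(P,v)})$, and choose $T=\Theta(\sqrt{HT(P,v)})$ with $k=\Theta(\log(1/\eps))$. The two caveats you flag (the overlap estimate from Ref.~\cite{CNR18} and the irrelevance of the degenerate $0$-eigenspace to $\Delta E_S$) are exactly the right points to check, and the paper handles them the same way.
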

\begin{proof}
We apply Lemma \ref{lem:better}, starting from $\ket{\pi(0),0}$ and by choosing $S=\ket{v_n(s^*),0}\bra{v_n(s^*),0}$. We obtain that
for any $x\in M$,
\begin{align}
    p_{\mathrm{succ}}&\geq \sum_{x\in M}\abs{\braket{x}{v_n(s^*)}\braket{v_n(s^*)}{\pi(0)}}^2 - \sqrt{3}\cdot \prnt{\frac{2}{T\sqrt{\Delta(s^*)}}}^k\\
                     & \geq \dfrac{1}{4}-\eps,
\end{align}
for any 
\begin{align}
T \geq \dfrac{4}{\sqrt{\Delta(s^*)}}\geq \sqrt{2 HT^+(P,M)},
\end{align} 
and by choosing $k=\lceil\log_2 \left(\frac{\sqrt{3}}{\eps}\right)\rceil$, where $\eps\in (0,1/4)$. Thus the overall running time 
\begin{equation}
T'= kT=O\left(\sqrt{HT^+(P,M)}\log \dfrac{1}{\eps}\right),
\end{equation}
thereby improving the dependence on $\eps$ exponentially.
\end{proof}

As mentioned before, the extended hitting time is equal to the hitting time when $|M|=1$, thus giving a full quadratic speedup for the spatial search problem over classical random walks. However when multiple vertices are marked, the extended hitting time can be larger than the hitting time. As a result, unlike in the discrete-time quantum walk framework \cite{AGJ19}, the problem of whether a full quadratic speedup is possible in the CTQW framework, even in the case of multiple marked vertices had been open. Recently, Apers et al. have managed to close this problem, the details of which shall appear elsewhere \cite{ACNR21}.

\section{Discussion}
\label{sec:discussion}
In this article, we provided a more general definition of the hitting time of continuous-time quantum walks and have elucidated strategies by which hitting times of continuous-time quantum walk based algorithms can be improved. Furthermore, we have applied our results to improve the running times of two important CTQW-based algorithms namely the glued-trees algorithm and the spatial search algorithm.

For the glued-trees algorithm, we have improved its running time from $O(n^5)$ in Ref.~\cite{CCDFGS03} to $O(n^2\log n)$. The source of the improvement is our bound's dependence on the gap between some eigenspace $S$ of the Hamiltonian defining the quantum walk and the rest. We have shown that $S$ is comprised of $n/2$-eigenstates, each of which add to the probability of finding the $\mathtt{Exit}$ vertex. As a result,  the quantum walk needs near linear time $T\approx n\log n$ to find the \texttt{Exit} with probability $\approx 1/n$. 

Of course, this also implies that by substituting classical repetitions with the amplitude amplification procedure \cite{BHMT00} (a quantum algorithm in the circuit model), the running time of the Glued trees algorithm can be improved to $O\left(n^{3/2}\log n\right)$. However, since we are working in framework of continuous-time quantum walks, an analog model of quantum computation, we assumed that we only have access to the continuous time-evolution under the quantum walk Hamiltonian, leading to a $O(n^2\log n)$ running time. In this regard, one direction of future research would be to explore the possibility of improving the complexity of the Glued trees algorithm to linear in $n$ either by exactly solving the underlying Schr\"{o}dinger Equation or by attaching semi-infinite pathways to the vertices of the Glued-trees graph and scattering wave packets off of it. The ballistic spread of wavepackets would underpin the possibility of an improved running time, using techniques that are crucial to demonstrate that quantum walks are universal for quantum computing \cite{childs2013universal}.

For the spatial search algorithm, our methods help find an element from a marked set $M$ in any ergodic, reversible Markov chain $P$ with success probability $1/4-\eps$ in time $T=O(\sqrt{HT^+(P,M)}\log 1/\eps)$ time, for some $\eps\in (0,1/4)$, whereas previously the algorithm of \cite{CNR18} required a time of $O(\sqrt{HT^+(P,M)}/\eps)$, thereby improving the dependence on precision exponentially. The improvement in the running time stems from evolving the search Hamiltonian for a time chosen from the Irwin-Hall distribution (sum of uniform random variables), as opposed to \cite{CNR18}, where the CTQW evolution time is chosen uniformly at random from some interval.

Our work opens up several interesting questions. One natural question to ask is what dependence does the   choice of distribution of the random evolution time of the CTQW have on the quantum hitting time? Furthermore, for a given CTQW algorithm, what is the distribution that minimizes the quantum hitting time?  Our bounds on the quantum hitting times are quite general and can be applied to other CTQW-based algorithms. For example, they can help improve the algorithmic performance of the quantum algorithm to sample from the stationary state of any ergodic, reversible Markov chain of Ref.~\cite{CLR19}. In particular, the algorithm therein has a considerable overhead in terms of the number of ancilla qubits which can, in principle be overcome using the techniques we presented here. 

Some of the techniques presented in the article can have broader applications that go beyond continuous-time quantum walks. As mentioned before, randomized time-evolution introduces dephasing in the eigenbasis of the Hamiltonian, thereby decoupling certain eigenstates from the rest. This results in a mixed state spanned by the relevant eigenstates. As such, this provides a continuous-time procedure that can be applied to state-preparation problems, which are ubiquitous throughout quantum computation. For example, numerical evidence suggests that continuous-time quantum walks can be used to find the ground states of spin glasses \cite{CCMK2019} with a super-quadratic scaling in the running time. It would be interesting to explore whether our techniques can provide analytical insights into the observations therein.  
\begin{acknowledgments}
Y.A. and S.C. thank Andrew Childs for helpful discussions. Y.A. thanks Dorit Aharonov and Patrick Rall for useful discussions. Y.A. acknowledges support from  ERC grant number 280157,  and from Simons foundation grants 385590 and 385586. S.C. thanks Tanima Karmakar for inspiring discussions and acknowledges support from IIIT Hyderabad. 
\end{acknowledgments}

\begin{appendix}
\section{Proof of Lemma \ref{lem:main}\label{apndx:ProofWeakLemma}}
\setcounter{lemma}{1}
Recall that
\begin{lemma} 
Consider a CTQW in $H$ with $t\in[0,T]$. Let $\mathcal V^*$ be an eigenspace of $H$ with energy $E^*$. Then, 
 \begin{equation}
\bar{p}_{T}(y|\psi_0)\ge\abs {\bra{y}\Pi_{\mathcal V^*}\ket{\psi_0}}^2  \prnt{1-\frac{4}{T\Delta E^*}}
\end{equation}  
wherein $\Pi_\mathcal{V^*}$ is a projection on $\mathcal V^*$; and $\Delta E^*$ is the smallest gap between $E^*$ and the other eigenvalues of the Hamiltonian. 
\end{lemma}



\begin{proof} For the proof, we first isolate one eigenspace $\mathcal V^*$ with energy $E^*$ from the rest of the eigenstates.
\begin{equation}
\begin{split}
\bar{p}_{T}(y|\psi_0)&=\frac{1}{T}\intop_0^T dt \abs {\bra{y}e^{-iHt}\ket{\psi_0}}^2
=
\frac{1}{T}\intop_0^T dt \abs {e^{-iE^* t}\bra{y}\Pi_{\mathcal V^*}\ket{\psi_0}+\sum_{k:\mathcal V_k \perp \mathcal V^*}e^{-iE_kt}\bra{y}\Pi_{\mathcal V_k}\ket{\psi_0} }^2 
=
\\
&\frac{1}{T}\intop_0^T dt \abs {\bra{y}\Pi_{\mathcal V^*}\ket{\psi_0}+\sum_{k:\mathcal V_k \perp \mathcal V^*}e^{-i(E_k-E^*)t}\bra{y}\Pi_{\mathcal V_k}\ket{\psi_0} }^2  
\end{split}
\end{equation}

The time-dependent sum coherently destroys $\bra{y}\Pi_{\mathcal{V}^*}\ket{\psi_0}$ at $t=0$, but cannot sustain the destructive interference for long. Using the inequality $\abs{a+b}^2 \ge \abs{a}^2 + 2 \mathrm {Re}(ab^*)$, we get
\begin{equation} \label{eq:AB}
\begin{split}
\bar{p}_{T}(y|\psi_0) &=
... 
\ge
\frac{1}{T}\intop_0^T dt \abs {\bra{y}\Pi_{\mathcal V^*}\ket{\psi_0}}^2 +\frac{2}{T}\mathrm {Re} \intop_0^T dt \prnt{\bra{y}\Pi_{\mathcal V^*}\ket{\psi_0}\sum_{k:\mathcal V_k \perp \mathcal V^*}e^{i(E_k-E^*)t}\bra{x}\Pi_{\mathcal V_k}\ket{y}} 
\\
= &
 \abs {\bra{y}\Pi_{\mathcal V^*}\ket{\psi_0}}^2 +\frac{2}{T}\mathrm {Re} \sum_{k:\mathcal V_k \perp \mathcal V^*} {\bra{y}\Pi_{\mathcal V^*}\ket{\psi_0}\bra{\psi_0}\Pi_{\mathcal V_k}\ket{y}\frac{e^{i(E_k-E^*)T}-1}{i(E_k-E^*)}}
\end{split}
\end{equation}
 We bound the second term: 
\begin{equation}
    \begin{split}
    \bar{p}_{T}(y|\psi_0) 
&=
... 
\ge
\abs {\bra{y}\Pi_{\mathcal V^*}\ket{\psi_0}}^2 -\frac{2}{T} \abs{ \sum_{k:\mathcal V_k \perp \mathcal V^*} {\bra{y}\Pi_{\mathcal V^*}\ket{\psi_0}\bra{x}\Pi_{\mathcal V_k}\ket{y}\frac{e^{i(E_k-E^*)T}-1}{i(E_k-E^*)}}}
\\
\ge&
\abs {\bra{y}\Pi_{\mathcal V^*}\ket{\psi_0}}^2 - \frac{2}{T} \abs{ \sum_{k:\mathcal V_k \perp \mathcal V^*} {\bra{y}\Pi_{\mathcal V^*}\ket{\psi_0}\bra{\psi_0}\Pi_{\mathcal V_k}\ket{y}} }
\cdot \frac{2}{\Delta E^*}.
\end{split}
\end{equation}
From orthogonality,
$$
\braket{y}{\psi_0}=\bra{y}\Pi_{\mathcal V^*}\ket{\psi_0} +\sum_{k:\mathcal V_k \perp \mathcal V^*} \bra{y}\Pi_{\mathcal V_k}\ket{\psi_0}=0
$$
Hence, 
\begin{equation}
    \begin{split}
    \bar{p}_{T}(y|\psi_0) 
&=
... 
\ge
\abs {\bra{y}\Pi_{\mathcal V^*}\ket{\psi_0}}^2 - \frac{4}{T\Delta E^*} \abs{ \sum_{k:\mathcal V_k \perp \mathcal V^*} {\bra{y}\Pi_{\mathcal V^*}\ket{\psi_0}\bra{\psi_0}\Pi_{\mathcal V_k}\ket{y}} }= \abs {\bra{y}\Pi_{\mathcal V^*}\ket{\psi_0}}^2  \prnt{1-\frac{4}{T\Delta E^*}}
\end{split}
\end{equation}

\end{proof}

\setcounter{lemma}{3}

\section{Proof of Lemma \ref{lem:better}\label{apndx:ProofStrong}}

The characteristic function provides a useful tool for our analysis. Consider a random variable $X\in \mathbb{R}$ from some continuous probability distribution such that its probability density function is defined as $f_X$. Then the characteristic function of $X$ is defined as
\begin{equation}
\Phi_X(r)=\mathbb{E}\left[e^{irX}\right]=\int_{\mathbb{R}}e^{irX} f_X\ dx
\end{equation}   

For example if $X\in U[0,T]$, where $U[0,T]$ is the uniform distribution defined in the interval $[0,T]$, then
\begin{equation}
\Phi_X(r)=\dfrac{e^{iTr}-1}{iTr}.
\end{equation}
Observe that the transformation induced by a single randomized time evolution is of the following form
\begin{align}
\label{eq:maps-characteristic-function-uniform-1}
\ket{E_k}\bra{E_k}&\mapsto \ket{E_k}\bra{E_k}\\
\label{eq:maps-characteristic-function-uniform-2}
\ket{E_k}\bra{E_j}&\mapsto \dfrac{1}{T}\intop_{0}^{T} dt \ e^{-i(E_k-E_j)t} \ket{E_k}\bra{E_j}\ dt=\Phi_X(E_j-E_k) \ket{E_k}\bra{E_j}
\end{align} 
Suppose that now  the Hamiltonian $H$ is evolved (without a measurement) first for time $t'=t_1+t_2+\cdots+t_k$, such that each $t_i\in U[0,T]$. Then, the maximal total time of evolution is $T'=kT$, and the overall evolution time is a random variable from a sum of uniform random variables. It is well known that $T'$  follows the Irwin-Hall distribution \cite{JKB94} and the characteristic function of $T'$ is given by
\begin{equation}
\Phi_{T'}(r)=\left(\dfrac{e^{irT}-1}{iTr}\right)^k.
\end{equation}
As in Eq.~\eqref{eq:maps-characteristic-function-uniform-1} and in Eq.~\eqref{eq:maps-characteristic-function-uniform-2} we have,
\begin{align}
\label{eq:maps-characteristic-function-sum-uniform}
\ket{E_k}\bra{E_k}&\mapsto \ket{E_k}\bra{E_k}\\
\ket{E_k}\bra{E_l}&\mapsto \Phi_{T'}(E_l-E_k) \ket{E_k}\bra{E_l}=\left(\frac{e^{i(E_k-E_l)T}-1}{i(E_k-E_l)T}\right)^k\ket{E_k}\bra{E_l}.
\end{align}

Let us denote the Frobenius norm of an operator $A$ as $||A||_F$. Then in order to prove Lemma \ref{lem:better}, we first show that starting from some $\rho_0=\ket{\psi_0}\bra{\psi_0}$, where $\ket{\psi_0}=\sum_{j=1}^n c_j\ket{E_j}$, the time-averaged density matrix $\langle \rho(T')\rangle$ becomes close (in Frobenius norm) to the following density matrix
\begin{equation}
\label{eq:dephased-density-matrix-repeated-randomization}
\bar{\rho'}=\sum_{k:E_k\in \mathcal{V_S^*} } |c_k|^2 \ket{E_k}\bra{E_k} + \rho'^{\perp},
\end{equation}
where
$$
\rho'^{\perp}=\sum_{\substack{k:E_k\notin S\\l:E_l\notin S}}  c_k c_l^*\  \Phi_{T'}(E_l-E_k)\ket{E_k}\bra{E_l}.
$$
This has been demonstrated in the following lemma.
~\\
\begin{lemma}
\label{lem:multiple-evolutions}
Let $T'=t_1+\dots+t_k$ where $t_1,\dots,t_k$ are i.i.d.  distributed uniformly in $[0,T]$. Let $\bar{\rho'}$ be defined as in Eq.~\eqref{eq:dephased-density-matrix-repeated-randomization}. Then the CTQW with the distribution $T'$ results in a state $\langle \rho(T') \rangle$ such that
$$
\nrm{\langle \rho(T') \rangle - \bar{\rho'}}_F\leq \sqrt{3} \cdot \prnt{\frac{2}{T\Delta E_S}}^k.
$$

\end{lemma}
~\\
\begin{proof}
The time evolved state according to this new distribution is
\begin{equation}
\langle \rho(T') \rangle = \bar{\rho'}+\sum_{\substack{E_k, E_l\in S \\ E_k\neq E_l}} c_kc_l^*\left(\frac{e^{i(E_k-E_l)T}-1}{i(E_k-E_l)T}\right)^k\ket{E_k}\bra{E_l}
+
\prnt{
\sum_{\substack{E_p\notin S \\ E_q\in S}} c_pc_q^*\left(\frac{e^{i(E_p-E_q)T}-1}{i(E_p-E_q)T}\right)^k\ket{E_p}\bra{E_q}+h.c}.
\end{equation}

Then,
\begin{align}
\nrm{\langle \rho(T') \rangle - \bar{\rho'}}_F&= \nrm{\sum_{\substack{E_k, E_l\in S \\ E_k\neq E_l}} c_kc_l^*\left(\frac{e^{i(E_k-E_l)T}-1}{i(E_k-E_l)T}\right)^k\ket{E_k}\bra{E_l}
+
\prnt{
\sum_{\substack{E_p\notin S \\ E_q\in S}} c_pc_q^*\left(\frac{e^{i(E_p-E_q)T}-1}{i(E_p-E_q)T}\right)^k\ket{E_p}\bra{E_q}+h.c.}}_F\\
                                                \label{eq:mixed-state-better-bound}
                  &= \sqrt{\sum_{\substack{E_k, E_l\in S\\ E_k\neq E_l}} \abs{c_kc_l}^2\abs{\frac{e^{i(E_k-E_l)T}-1}{(E_k-E_l)T}}^{2k} + 2\sum_{\substack{E_p\notin S \\ E_q\in S}} \abs{c_pc_q}^2\abs{\frac{e^{i(E_p-E_q)T}-1}{(E_p-E_q)T}}^{2k}}
\\
&\le \sqrt{3} \cdot \prnt{\frac{2}{T\Delta E_S}}^k
\end{align}


\end{proof}


Lemma \ref{lem:multiple-evolutions} shows that the for large enough $T$, the density matrix becomes a mixture of eigenstates in $S$ and some residue matrix $\rho'^\perp$ with no support on these eigenstates. This gives an improved bound to the hitting time, because now all the eigenstates in $S$ contribute to the probability to find $y$ and consequently, we directly obtain that
\begin{equation}
\begin{split}
\bar p_T(y|\psi_0) &= \tr\prnt{\ketbra{y}{y} \langle \rho (T')\rangle} \ge \tr\prnt{\ketbra{y}{y} \bar\rho'} -\tr(\ketbra{y}{y})\cdot  \sqrt{3} \cdot \prnt{\frac{2}{T\Delta E_S}}^k
\end{split}
\end{equation}
The proof of Lemma \ref{lem:better} concludes by taking only the population of $\bar\rho'$ corresponding to $S$.

\section{\label{apndx:claim} Proof of Claim \ref{clm:clm}}
\addtocounter{claim}{-1}
\begin{claim} 
The energy gap for energy levels corresponding to $p=\frac{\ell \pi}{n}-\delta$, where $\ell=\Theta(n)$, is proportional to $1/n$.
\end{claim}
\begin{proof}
Recall that the eigenvalues take the form $E_p=2\cos p$ where:
\begin{equation} \label{eq:p_constraint}
    \frac{\sin((n+1)p)}{\sin np}=\pm\sqrt 2.
\end{equation}
Furthermore, the eigenvalues which correspond to the positive RHS in Eq. \ref{eq:p_constraint} interleave with those with the negative RHS \cite{CCDFGS03}. Following \cite{CCDFGS03}, we substitute $p$ with $\frac{\ell\pi}{n}\pm \delta$  for the RHS being $\pm \sqrt 2$, wherein $\delta>0$.  From the interleaving property, $\delta \le \frac{\pi}{n}$. 

First we solve for  $-\sqrt 2$, by finding the smallest $\delta >0$ satisfying:
\begin{equation} \label{eq:sqrt2}
    -\sqrt 2 \sin (n\delta) = \sin (n\delta -\ell \pi/n +\delta)
\end{equation}
We define $\gamma =\ell \pi/n \in (0,\pi)$, which is a constant for $\ell=\Theta(n)$.
\begin{equation} \label{eq:Trigo}
    \begin{split}
        -\sqrt 2 \sin (n\delta)&=\sin (n\delta) \cos (\delta-\gamma)+\cos (n\delta) \sin (\delta - \gamma) 
        \\
        \sin (n\delta)(-\sqrt 2 -\cos(\delta-\gamma))&=\cos(n\delta)\sin(\delta-\gamma)
        \\
        \tan (n\delta)&=\frac{\sin(\gamma-\delta)}{\sqrt 2 + \cos(\delta-\gamma)} 
    \end{split}
\end{equation}
The RHS of the the last line in Eq. \ref{eq:Trigo} is positive because  $\delta=O(1/n)$ and $\gamma\in(0,\pi)$ is a constant. Hence, $n\delta\le  \frac{\pi}{2}$.  Bounding the RHS we get
\begin{equation}
\begin{split}
      \tan (n\delta)&=\frac{\sin(\gamma-\delta)}{\sqrt 2 + \cos(\delta-\gamma)} \le \frac{1}{\sqrt{2}-1}.
\\
      \tan (n\delta)&=\frac{\sin(\gamma-\delta)}{\sqrt 2 + \cos(\delta-\gamma)} = \frac{\sin(\gamma)}{\sqrt 2 + \cos(\gamma)}+O(1/n)
\end{split}
\end{equation}
Hence, for $\ell=\Theta(n)$, $n\delta$ is a positive constant smaller than $3\pi/8$.

Similarly for a positive $\sqrt 2$ at Eq. \ref{eq:sqrt2}, and $p=\frac{\ell \pi}{n}+\delta$ we get:
\begin{equation}
\begin{split}
    \sqrt 2 \sin (n\delta)&=\sin(n\delta + \ell\pi/n+\delta)
    \\
    \sqrt 2 \sin(n\delta)&=\sin(n\delta)\cos(\gamma+\delta)+\cos(n\delta)\sin(\gamma+\delta)
    \\
    \sin(n\delta)(\sqrt 2 -\cos(\gamma+\delta))&=\cos(n\delta)\sin(\gamma+\delta)
    \\
    \tan(n\delta)&=\frac{\sin(\gamma+\delta)}{\sqrt 2 - \cos(\gamma+\delta)}, 
\end{split}
\end{equation}
and the rest of the analysis is the same.

In conclusion, we get that for $+\sqrt 2$ solutions, \begin{equation}    
\Theta(1/n)= p- \ell\pi/n \le  3\pi/8n,
\end{equation}
and for $-\sqrt 2$ solutions,
\begin{equation}    
\Theta(1/n)= \ell\pi/n- p\le 3\pi/8n.
\end{equation}
Hence, for $\ell=\Theta(n)$, the gap between $p$ solutions is $\Theta(1/n)$, and
\begin{equation}
    \Delta E = 2 \sin p \Delta p = \Theta(1/n).
\end{equation}


\end{proof}
\end{appendix}

\bibliography{bib}
\bibliographystyle{unsrt}
\end{document}